\journal{oe}
\newtheorem{theorem}{Theorem}[section]
\begin{document}

\title{Mie Scattering with 3D Angular Spectrum Method}

\author{Joel Lamberg\authormark{1,*}, Faezeh Zarrinkhat\authormark{1,2}, Aleksi Tamminen\authormark{1}, Mariangela Baggio\authormark{1}, Juha Ala-Laurinaho\authormark{1}, Juan Rius\authormark{2} Jordi Romeu\authormark{2}, Elsayed E. M. Khaled\authormark{3,4} and Zachary Taylor\authormark{1}}

\address{\authormark{1} Department of Electronics and Nanoengineering, MilliLab, Aalto University, Espoo, Finland\\
\authormark{2} CommSensLab, Technical University of Catalonia/UPC, Barcelona, Spain\\
\authormark{3} Department of Electrical Engineering, Assiut University, Assiut, Egypt\\
\authormark{4} High Institute of Engineering and Technology, Sohage, Egypt}

\email{\authormark{*} joel.lamberg@aalto.fi} 



\begin{abstract}
Mie theory is a powerful method to model electromagnetic scattering from a multilayered sphere. Usually, the incident beam is expanded to its vector spherical harmonic representation defined by beam shape coefficients, and the multilayer sphere scattering is obtained by the T-matrix method. However, obtaining the beam shape coefficients for arbitrarily shaped incident beams has limitations on source locations and requires different methods when the incident beam is defined inside or outside the computational domain or at the scatterer surface. We propose a 3D angular spectrum method for defining beam shape coefficients from arbitrary source field distributions. This method enables the placement of the sources freely within the computational domain without singularities, allowing flexibility in beam design. We demonstrate incident field synthesis and spherical scattering by comparing morphology-dependent resonances to known values, achieving excellent matching and high accuracy. Additionally, we present mathematical proof to support our proposal. The proposed method has significant benefits for optical systems and inverse beam design. It allows for the analysis of electromagnetic forward/backward propagation between optical elements and spherical targets using a single method. It is also valuable for optical force beam design and analysis. 
\end{abstract}

\section{Introduction}
Electromagnetic scattering from a homogeneous sphere illuminated by an arbitrary incident beam can be computed with conventional methods, including full-wave simulations \cite{Stratton1}, geometrical optics \cite{geo}, or physical optics \cite{PO,Sanford}. These techniques are well-studied and accurate, given the model fidelity and a suitable wavelength range. However, without considerable computational effort, they cannot assess the internal and scattered electric fields from multilayered spherical objects. Especially when the sphere's radius is of the order of wavelengths, complete classical electromagnetic wave theory is needed \cite{RanhaNeves:19}.

Mie theory and the generalized Lorentz-Mie theory are accurate methods to evaluate the internal and scattered fields from the multilayered spheres \cite{GOUESBET20111,GOUESBET20137,Wu:97}, where the incident beam is presented in vector spherical harmonics (VSH) expansion defined by beam shape coefficients (BSCs) \cite{bohrn}. However, obtaining the BSCs for arbitrarily shaped incident beams can be difficult; often needing a combination of several complex methods \cite{GOUESBET20111,Gouesbet:90}. The BSCs can be computed from a known function or field distribution with certain constraints, such as polarization, source shape, and location limitations \cite{ARB1,Gauss1}. In these cases, BSCs for the incident beam are obtained by the Bromwich method \cite{Maheu_1988} or multipole expansion \cite{Lock:06,Zvyagin:98,Taylor:09}. These methods define the sources inside a closed volume, and the VSH expanded fields can be computed only outside of this area, limiting the source location. 

BSCs can also be computed from known electric and magnetic field distributions on the surface of the spherical scatterer using closed surface orthogonality (CSO). CSO also has limitations as the fields must be defined across the entire spherical surface \cite{CSO}. For example, CSO cannot be applied to fields defined only on spherical subregion thus excluding it from many inverse beam synthesis tasks, where the beam only illuminates a small area of the sphere \cite{IRMMW2022}. On the other hand,  BSCs can be computed from the standard 2D ASM, which does not have these location restrictions due to its eigenfunction property. However, 2D ASM is limited to the planar surface distributions \cite{goodman,khaled}; see the comparison of the methods in Table (\ref{TABLE}).
\begin{table}[h!]
\begin{center}
\begin{tabular}{||m{9em}|m{3.3em}|m{3.3em}|m{3.3em}|m{3.3em}|m{3.3em}|m{3.3em}||} 
 \hline
Method &  Source outside comp. domain &  Source inside comp. domain &  Source on sphere´s sub-region &  \(RoC \sim\lambda\) & VSH expansion & Truly arb. field\\ 
 \hline\hline
Bromwich \cite{Maheu_1988} & \textcolor{purple}\checkmark &  &  & \textcolor{purple}\checkmark & \textcolor{purple}\checkmark & \textcolor{purple}\checkmark \\ 
 \hline
Multipole exp. \cite{Lock:06,Zvyagin:98,Taylor:09} &  \textcolor{purple}\checkmark &  &  & \textcolor{purple}\checkmark & \textcolor{purple}\checkmark & \textcolor{purple}\checkmark\\ 
 \hline
 CSO \cite{CSO} &  & \textcolor{purple}\checkmark &  & \textcolor{purple}\checkmark & \textcolor{purple}\checkmark & \textcolor{purple}\checkmark\\
  \hline
  Sanford et al. \cite{Sanford} & \textcolor{purple}\checkmark &  &  & \textcolor{purple}\checkmark & \textcolor{purple}\checkmark & \textcolor{purple}\checkmark\\
 \hline
  2D ASM \cite{goodman,khaled} & \textcolor{purple}\checkmark & \textcolor{purple}\checkmark & \textcolor{purple}\checkmark &  &  &\\
 \hline
  Stepwise ASM \cite{StepAS,Ebers:20}& \textcolor{purple}\checkmark & \textcolor{purple}\checkmark & \textcolor{purple}\checkmark &  &  &\\
 \hline
\textbf{\textcolor{teal}{3D ASM}} & \textcolor{teal}\checkmark & \textcolor{teal}\checkmark & \textcolor{teal}\checkmark & \textcolor{teal}\checkmark & \textcolor{teal}\checkmark & \textcolor{teal}\checkmark\\ 
 \hline
\end{tabular}
\end{center}
\caption{\small Characteristics of the selected methods in interest.}
\label{TABLE}
\end{table}\FloatBarrier
This article aims to create a VSH-expanded electromagnetic beam from a known electric field distribution on arbitrarily shaped and positioned surface. Then the incident beam could be modified in terms of propagating field distribution, polarization, and local phase variation. Also, the source location can be positioned inside the simulation area without restrictions. This goal can be achieved by expanding the 2D ASM to a 3D AS approach to accommodate arbitrary fields defined on arbitrarily shaped surfaces and construct modified BSCs for an incident beam VSH presentation.

2D ASMs present the incident field in the angular spectrum domain as the sum of differently oriented plane waves. When this angular summation of plane waves on a source plane is presented in a direction cosine coordinate system, these angles can be used directly to compute the BSCs for the VSH expansion. Additionally, the internal and scattered fields from a multilayered dielectric sphere can be mapped with the extended boundary condition method (EBCM), which is referred as the T-matrix method in this article\cite{khaled}. 

In \cite{StepAS}, the planar AS method was expanded to approximate diffractive fields from 2D curved surfaces. The expansion was obtained by dividing the curved 2D surface into the step-wise subregions windowed by the Gaussian function. This approach was later expanded to the 3D surfaces in \cite{Ebers:20}, where the authors used the same planar step-wise subsections with Gaussian distribution. Both methods approximate well the diffraction, reflections, and transmission from the curved single-layer boundaries when the radius of curvature (RoC) is much larger than the wavelength \cite{Worku:18}. 

The 3D ASM derivation begins with the same approach of dividing an arbitrary surface into planar subregions, presented as equal amplitude areas without Gaussian distribution or windowing (e.g. Stepwise AS). Then the areas of these subregions are shrunk into infinitesimally small points, which approach a Dirac delta function in the limit. This approach yields exact results (in the limit of infinitesimally small partitions) of the diffractive field from an arbitrary surface considering surface structures comparable to or larger than the illumination wavelength, i.e., \( k\alpha\geq2\pi\), where \( k\) is the wavenumber and \( \alpha\) is the radius of the sphere. 

Moreover, this result is analogous to the angular spectrum presentation of the Stratton-Chu method with magnetic dipole sources, where the electromagnetic radiation from a source point is calculated as a curl of a magnetic dipole multiplied by scalar Green's function \cite{6710963}. This presentation satisfies Maxwell's equations and approximates well curved surfaces when the RoC is larger than a wavelength \cite{Freni}. Also, the advantages of the presented method are non-singularity at the source points, allowing the synthesis of a continuous beam through the source surface from any known electric field distribution. Finally, the incident beam can be expanded into a VSH presentation to compute scattered fields from multilayered spheres.
\section{Theory}
This chapter presents a method to compute the forward/backscatter of multilayered spheres under arbitrary beam illumination. The chapter is divided into three sections introducing the modified 3D ASM, global coordinate mapping system, and modified BSCs coefficients for the VSH expansion.

Section 2.1 begins by introducing a general parametrization of an arbitrary surface. The local base vectors are derived from parametrization to determine a Cartesian coordinate system for surface sources. Further, the electric field polarization is modified by base vector rotation. Also, a transformation matrix is presented for mapping the local source coordinates in a global coordinate system and vice versa. After coordinate mapping, an arbitrary surface parametrization is discretized into infinitesimally small surface elements. The key concept is that Riemann's surface integral combines differential surface elements, which can be approximated as locally planar elements. The ASM models the radiated field from each element, and the total electromagnetic field is the sum of the modeled fields from each source point by the superposition principle \cite{goodman}. This approach relies on the assumption that when the areas of the elements are small enough, the synthesized field from the differential elements approaches the field from the original surface. 

Section 2.2 presents a system to map local coordinates into a global origin-centered coordinate system with each source point's orientation and position information, which is later used to construct the BSCs.

Section 2.3 presents the VSH expansion from an arbitrary surface with modified BSCs. These BSCs are modified to include the mapped source's orientation and position information while preserving the needed spherical symmetry for VSH expansion.
\subsection{Modified angular spectrum method}

Let the global Cartesian base-vectors  be (\(\mathbf{e}_x,\mathbf{e}_y,\mathbf{e}_z\)) in space \(\mathbb{R}^3\), where position vector is \(\mathbf{r}=x\mathbf{e}_x+y\mathbf{e}_y+z\mathbf{e}_z\) and global coordinates are marked as \(r_\text{glob}=(x,y,z)\). Consider a generalized, compact surface \(\Omega\) in \(\mathbb{R}^3\), which has a continuously differentiable parametrization with parameters \(p\) and \(q\) as
\begin{equation}\label{eq1}
\begin{array}{l}
\Omega = \Big\{\mathbf{o}(p,q)=o_x(p,q)\mathbf{e}_x+o_y(p,q)\mathbf{e}_y +o_z(p,q)\mathbf{e}_z \ | \\ p\in[p_1,p_2], \ q\in[q_1,q_2]\Big\}.
\end{array}
\end{equation}
Let us place the origin of the local coordinate system at point \(\mathbf{o}(p,q\)) on the surface \(\Omega\), where two base vectors \(\mathbf{f}_1\) and \(\mathbf{f}_2\) define tangential plane at the point, and the third base vector \(\mathbf{f}_3\) is a local surface normal vector, see Figure (\ref{surf}). When normalized, these vectors form an orthonormal base (\(\mathbf{f}_1,\mathbf{f}_2,\mathbf{f}_3\)) and are defined as

\begin{equation}\label{eq2}
\begin{array}{l}
\mathbf{f_3}=\big(\frac{\partial o}{\partial p}\times\frac{\partial o}{\partial q}\big)\big\|\frac{\partial o}{\partial p}\times\frac{\partial o}{\partial q}\big\|^{-1},\ \ \
\mathbf{f}_1=\big(\frac{\partial o}{\partial q}\big)\big\|\frac{\partial o}{\partial q}\big\|^{-1},\ \ \
\mathbf{f}_2=\mathbf{f}_3\times\mathbf{f}_1.
\end{array}
\end{equation}
The final local base (\(\mathbf{e}_1,\mathbf{e}_2,\mathbf{e}_3\)) is obtained from the base (\(\mathbf{f}_1,\mathbf{f}_2,\mathbf{f}_3\)), when the local polarization of the electric field is chosen.  Let us set
\begin{equation}\label{eq3}
\begin{array}{l}
\mathbf{e}_3 = \pm \mathbf{f}_3,
\end{array}
\end{equation}
where the sign is chosen according to the desired electromagnetic propagation direction. In a simple case, when polarization is known on the surface \(\Omega\), the local electric field vector \(\mathbf{e}_1\) on \(\Omega\) can be presented with the base (\(\mathbf{f}_1,\mathbf{f}_2\)), and the local magnetic field vector \(\mathbf{e}_2\) is obtained as a cross-product with local propagation direction \(\mathbf{e}_3\) as  \(\mathbf{e}_2=\mathbf{e}_3\times\mathbf{e}_1\). 

Polarization on the surface \(\Omega\) can also have externally sourced initial conditions, for example, from a polarization of an incident beam, that creates the electric field distribution on the surface \(\Omega\). Let \(\mathbf{p}=\mathbf{p}(p,q)\) be a vector that is perpendicular to the plane of that incident beam polarization. Then, local direction \(\mathbf{e}_1\) is an intersection of that plane, and the tangential plane of surface \(\Omega\) spanned by (\(\mathbf{f}_1,\mathbf{f}_2\)). 
\begin{figure}[!ht]
\centering
\includegraphics[width=0.55\linewidth]{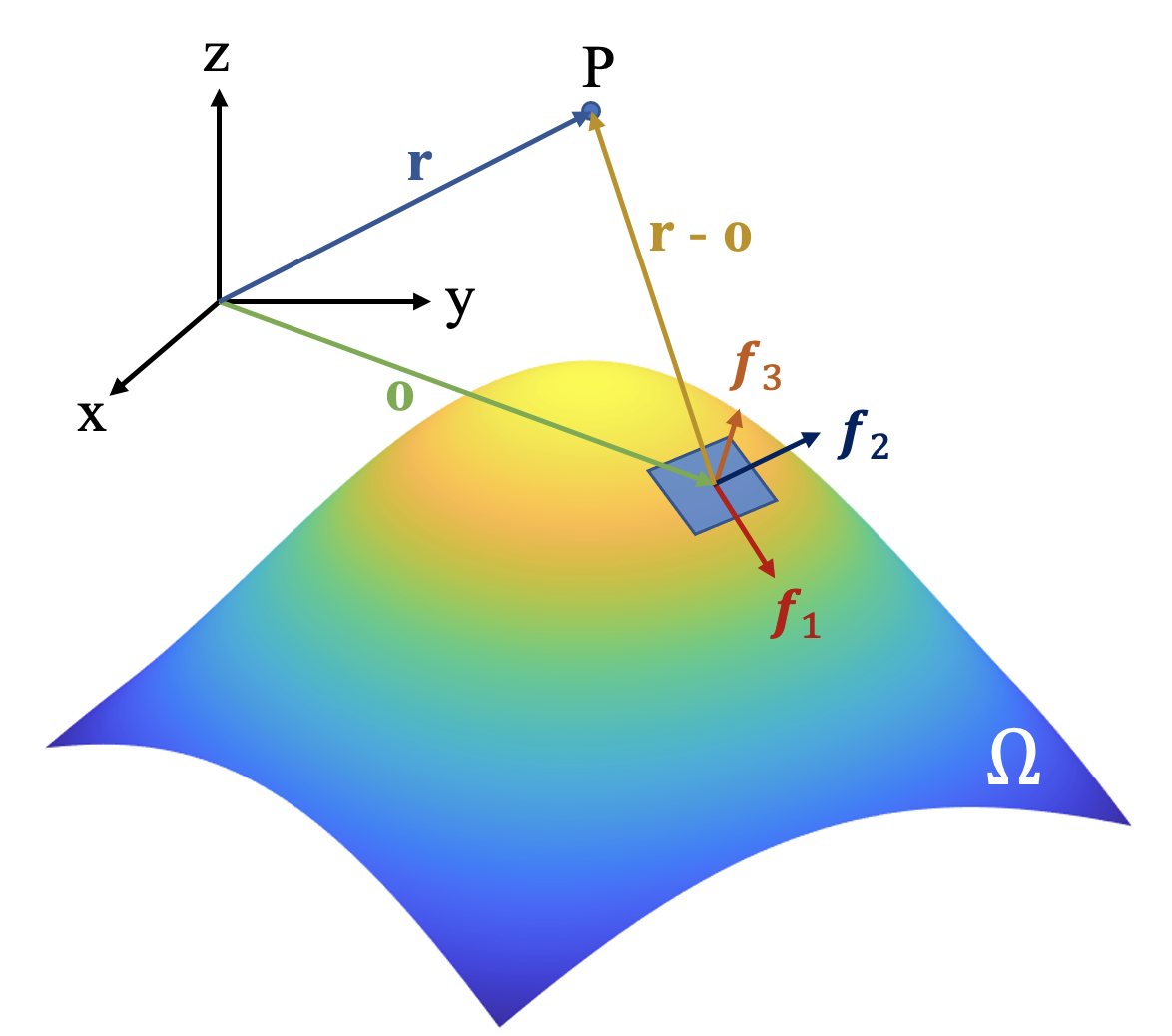}
\caption{\small Vector relations between source point base-vectors on an arbitrary surface and a global origin.}
\label{surf}
\end{figure}\FloatBarrier
Thus \(\mathbf{e}_1\), the base vector along the local electric field, and \(\mathbf{e}_2\), the base vector along the local magnetic field, are obtained as

\begin{equation}\label{eq4}
\begin{array}{l}
\mathbf{e}_1=(\mathbf{p}\times\mathbf{e}_3)\|\mathbf{p}\times\mathbf{e}_3\|^{-1}, \ \ \
\mathbf{e}_2=\mathbf{e}_3\times\mathbf{e}_1.
\end{array}
\end{equation}
The vectors \((\mathbf{e}_1,\mathbf{e}_2,\mathbf{e}_3\) form a right-handed orthonormal base, which together with the origin \(\mathbf{o}\) defines a local coordinate system. Let a vector \(\mathbf{r}\) be presented in this base as \(\mathbf{r}=\bar{x}\mathbf{e}_1+\bar{y}\mathbf{e}_2+\bar{z}\mathbf{e}_3\). Its coordinate triple is marked briefly as \(r_{\text{loc}}=(\bar{x},\bar{y},\bar{z})\). When the vector \(\mathbf{r}\) is expressed with both \((\mathbf{e}_x,\mathbf{e}_y,\mathbf{e}_z\)) and \((\mathbf{e}_1,\mathbf{e}_2,\mathbf{e}_3\)) bases, as it is known, the dependence of the coordinates on each other is determined by an orthogonal transformation matrix
\begin{equation}\label{eq5}
\begin{array}{l}
\Theta(p,q) = \left[\mathbf{e}_{1,{\text{glob}}}\,\,\mathbf{e}_{2,{\text{glob}}}\,\,\mathbf{e}_{3,{\text{glob}}}\right], \ \ \Theta^{-1}=\Theta^T,
\end{array}
\end{equation}
where the columns are the coordinate triples \(\mathbf{e}_{1,{\text{glob}}},\mathbf{e}_{2,{\text{glob}}}\) and \(\mathbf{e}_{3,{\text{glob}}}\).
Let the position \(P\) be presented in global coordinate system as \(\mathbf{r}=x\mathbf{e}_x+y\mathbf{e}_y+z\mathbf{e}_z\). Position \(P\) presented in the local coordinate system with a global base is
\begin{equation}\label{eq6}
\begin{array}{l}
\mathbf{r}-\mathbf{o}=(x-o_x)\mathbf{e}_x+(y-o_y)\mathbf{e}_y+(z-o_z)\mathbf{e}_z.
\end{array}
\end{equation}
Let´s denote briefly  \(\mathbf{\bar{r}}=\mathbf{r}-\mathbf{o}\). Thus, we have presentations of the location \(\mathbf{\bar{r}}\) in both local and global coordinate systems. The dependence of the corresponding coordinates can be written using the transformation matrix Eq. (\ref{eq5}) as
\begin{equation}\label{eq7}
\begin{array}{l}
(\mathbf{r}-\mathbf{o})_{\text{glob}}=\Theta\mathbf{\bar{r}}_{\text{loc}}.
\end{array}
\end{equation}

Let us divide the segment of the surface \(\Omega\) containing the electric field distribution \(E_0(p,q)\) into a finite partition of small separate sets \(\cup_c\Omega_c=\Omega\). Select the point  \(\mathbf{o}(p,q\)) on differential element \(\Omega_c\), the vectors associated with it are \(\mathbf{e}_1\) and \(\mathbf{e}_2\). Area \(\Omega_c\) is projected to the plane spanned by the vectors \(\mathbf{e}_1\) and \(\mathbf{e}_2\). The resulting area at the plane is marked as \(\Omega_t\). Likewise, function \(E_0\) is projected into that plane, limited to set \(\Omega_t\) and zero elsewhere. This geometry is presented in Figure \ref{fig:projection}. The resulting projection is marked as \(E_0^t\). In the local coordinate system it holds \(\bar{z}=0\) on the piece \(\Omega_t\), because this is located on the \(\bar{x}\bar{y}-\)plane. \(E_0^t\) is valid when \(\bar{x}\approx0\) and \(\bar{y}\approx0\). 
\begin{figure}[!ht]
\centering
\includegraphics[width=0.55\linewidth]{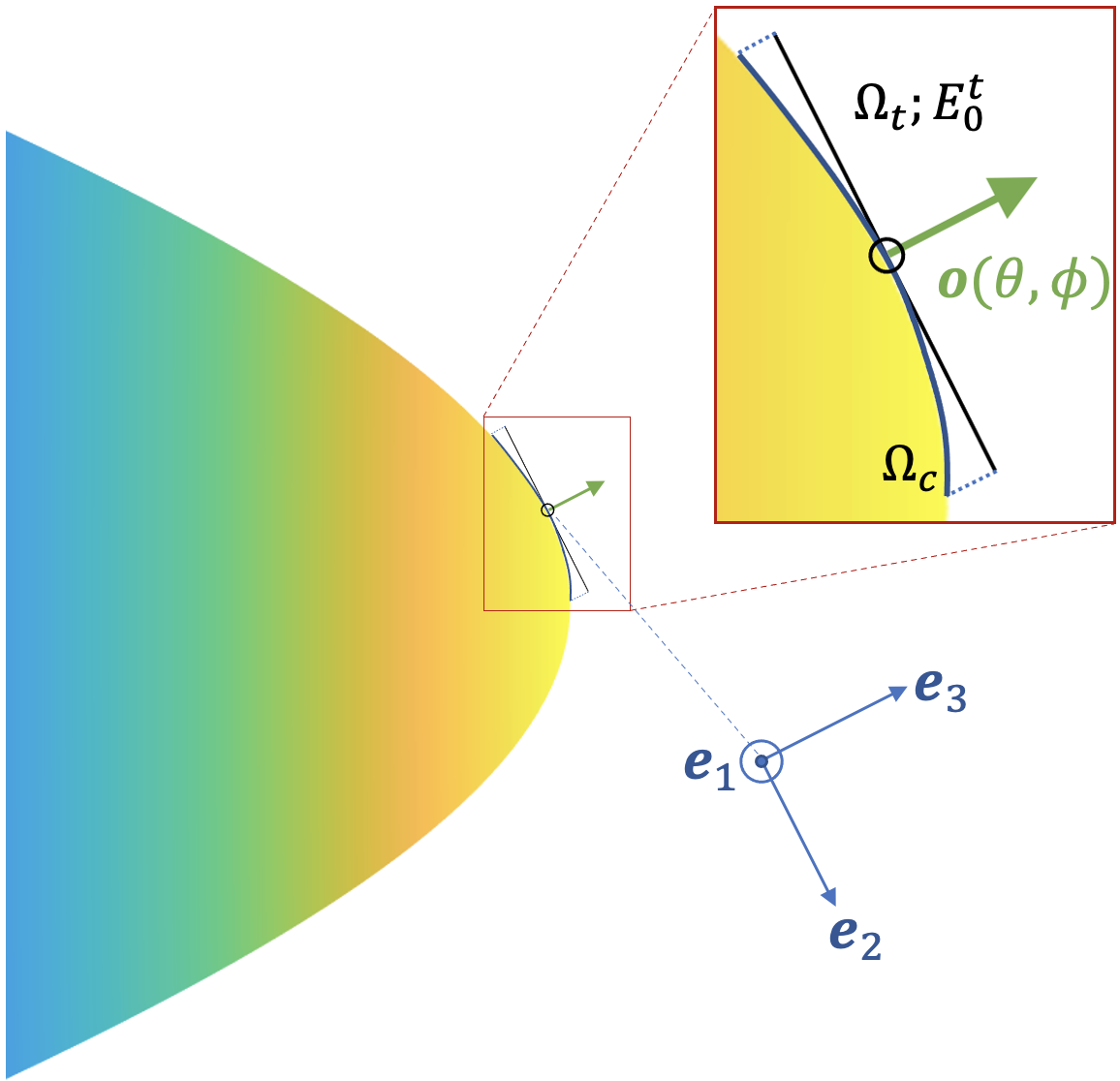}
\caption{\small Surface segment \(\Omega_c\) projected into a plane \(\Omega_t\).}
\label{fig:projection}
\end{figure}\FloatBarrier
The angular spectrum theory is locally applied on the local coordinate system to function \(E_0^t\) valid on piece \(\Omega_t\). Let's observe the field it creates at point \(\mathbf{\bar{r}}\), where the coordinates are as in Eq. (\ref{eq6}). At a fixed point \(\mathbf{r}\) holds \cite{goodman}
\begin{equation}\label{eq10}
\begin{array}{l}
E_t(\mathbf{r})=\frac{1}{4\pi^2}\iint_{\mathbb{R}^2}\mathcal{F}\big\{E_0^t\big\}(k_{\bar{x}},k_{\bar{y}})e^{i(k_{\bar{x}}\bar{x}+ k_{\bar{y}}\bar{y})}e^{i|\bar{z}|\sqrt{k^2-k_{\bar{x}}^2-k_{\bar{y}}^2}}dk_{\bar{x}}dk_{\bar{y}},
\end{array}
\end{equation}
where \(\mathcal{F}\big\{E_0^t\big\}\) is the Fourier transform of the local electric field \(E_0^t\) and \(k=(k_x^2+k_y^2+k_z^2)^{1/2}\) is the wavenumber. The total field from the arbitrary surface is obtained by summing the fields from the differential sources by the superposition principle as
\begin{equation}\label{eq11}
\begin{array}{l}
\begin{aligned}
\mathbf{E}_1(\mathbf{r})&\approx\sum_t E_t(\mathbf{r})\mathbf{e}_1=\sum_t\mathbf{e}_1\frac{1}{4\pi^2}\iint_{\mathbb{R}^2}\mathcal{F}\big\{E_0^t\big\}(k_{\bar{x}},k_{\bar{y}})e^{i(k_{\bar{x}}\bar{x}+k_{\bar{y}}\bar{y})}\\&\times e^{i|\bar{z}|\sqrt{k^2-k_{\bar{x}}^2-k_{\bar{y}}^2}}dk_{\bar{x}}dk_{\bar{y}},
\end{aligned}
\end{array}
\end{equation}
where local coordinates \((\bar{x},\bar{y},\bar{z})\) as in Eq. (\ref{eq6}) and the total field is polarized along the global \(\mathbf{e}_1\) unit-vector. The obtained field Eq. (\ref{eq11}) also satisfies the Helmholtz equation as unit vector \(\mathbf{e}_1\) is locally constant. To be precise, if the limit with respect to the areas \(|\Omega_t|\) of the sets \(\Omega_t\) exists, we define
\begin{equation}\label{eq12}
\begin{array}{l}
\mathbf{E}_1(\mathbf{r})=\mathbf{E}_1(x,y,z)=lim_{|\Omega_t|\to 0} \sum_t E_t(\mathbf{r})\,\mathbf{e}_1.
\end{array}
\end{equation}
As pointed out at the beginning of the Theory section, this limit also works physically when considering smooth surfaces and is even more precise when the wavelength decreases compared to the local radius of the surface. Finally, a further examination is made to obtain a closed form for clause Eq. (\ref{eq12}); we present a proof in Appendix A.
 
Heuristically, when \(|\Omega_t|\) shrinks, the function \(E_0^t/[E_0^t(\bar{0},\bar{0})|\Omega_t|]\), normed in volume, approaches a Dirac delta \(\delta_{(0,0)}\) in the sense of the distribution theory; \(E_0^t/[E_0^t(\bar{0},\bar{0})|\Omega_t|]\rightarrow\delta_{(0,0)}\) thus \(E_0^t\rightarrow E_0^t(\bar{0},\bar{0})|\Omega_t|\delta_{(0,0)}\). As known, the Fourier transform of Dirac delta is 1, thus, by small \(|\Omega_t|\)
\begin{equation}\label{eq13}
\begin{array}{l}
\begin{aligned}
\mathcal{F}\big\{E_0^t\big\}(k_{\bar{x}},k_{\bar{y}})&=\iint_{\mathbb{R}^2} E_0^t e^{-i(k_{\bar{x}}\bar{x}+ k_{\bar{y}}\bar{y})}d\bar{x}d\bar{y}\\
&\approx\iint_{\mathbb{R}^2} E_0^t(\bar{0},\bar{0})|\Omega_t|\delta_{(0,0)} e^{-i(k_{\bar{x}}\bar{x}+ k_{\bar{y}}\bar{y})}d\bar{x}d\bar{y}\\
&=E_0(\theta,\phi)|\Omega|\mathcal{F}\big\{\delta_{(0,0)}\big\}=E_0(\theta,\phi)|\Omega_t|,
\end{aligned}
\end{array}
\end{equation}
since \(E_1(p,q)=E_1^t(\bar{0},\bar{0})\) when \(\mathbf{o}\) is fixed. Then Eq. (\ref{eq11}) and Eq. (\ref{eq13}) yield first the form of a Riemann sum and then an integral as a limit
\begin{equation}\label{eq14}
\begin{array}{l}
\begin{aligned}
\mathbf{E}_1(\mathbf{r})&=\mathbf{E}_1(x,y,z)=lim_{|\Omega_t|\to 0} \sum_t\mathbf{e}_1\frac{1}{4\pi^2} E_0(p,q)|\Omega_t|\\&\times\iint_{\mathbb{R}^2}e^{i(k_{\bar{x}}\bar{x}+ k_{\bar{y}}\bar{y})}e^{i|\bar{z}|\sqrt{k^2-k_{\bar{x}}^2-k_{\bar{y}}^2}}dk_{\bar{x}}dk_{\bar{y}}\\&=\frac{1}{4\pi^2}\iint_{S}E_0(p,q)\mathbf{e}_1(\theta,\phi)\iint_{\mathbb{R}^2}e^{i(k_{\bar{x}}\bar{x}+k_{\bar{y}}\bar{y})}e^{i|\bar{z}|\sqrt{k^2-k_{\bar{x}}^2-k_{\bar{y}}^2}}dk_{\bar{x}}dk_{\bar{y}}d\Omega.
\end{aligned}
\end{array}
\end{equation}
The final form can be written as
\begin{equation}\label{eq15}
\begin{array}{l}
\mathbf{E}_1(\mathbf{r})=\frac{1}{4\pi^2}\iint_{\Omega}E_0(p,q)\mathbf{E}_t(\mathbf{r};\theta,\phi)\|\frac{\partial O}{\partial p}\times\frac{\partial O}{\partial q}\|d\theta d\phi,
\end{array}
\end{equation}
where
\begin{equation}\label{eq16}
\begin{array}{l}
\mathbf{E}_t(\mathbf{r};p,q)=\mathbf{e}_1(p,q)\iint_{\mathbb{R}^2}e^{i(k_{\bar{x}}\bar{x}+k_{\bar{y}}\bar{y})}e^{i|\bar{z}|\sqrt{k^2-k_{\bar{x}}^2-k_{\bar{y}}^2}}dk_{\bar{x}}dk_{\bar{y}}\\
\end{array}
\end{equation}
and \(\bar{x},\bar{y}\) and \(\bar{z}\) are as in Eq. (\ref{eq6}) and notation \((\mathbf{r};p,q)\) defines electric field on observation point \(\mathbf{r}\) related to the location (\(p,q)\) on the parametrizied surface. Propagating waves are obtained by integrating Eq. (\ref{eq16}) over the real disk \(k_{\bar{x}}^2+k_{\bar{y}}^2\leq k^2\), and evanescence waves can be obtained by expanding the integration over the real domain, i.e., \(k_{\bar{x}}^2+k_{\bar{y}}^2> k^2\).

There are two key observations regarding this derivation. First, Eq. (\ref{eq16}) is similar to \(\mathbf{E}_t=-\nabla\times(\mathbf{M}G)\) after replacing the scalar Green's function (\(G\)) by its angular spectrum given by Weyl identity, where \(\mathbf{M}=-2\mathbf{n}\times\mathbf{e}_2\) is a small magnetic dipole \cite{Chew}. Thus, Eq. (\ref{eq15}) is a particular case of the Stratton-Chu equation for the electric field in which only the magnetic sources \(\mathbf{M}=-\mathbf{n}\times\mathbf{E}_{surf}\) are presented and the electric ones \(\mathbf{J}=\mathbf{n}\times\mathbf{H}_{surf}\) have been removed \cite{Stratton}. Secondly, if the surface \(\Omega\) is planar, Eq. (\ref{eq15}) and Eq. (\ref{eq16}) return to the original two-dimensional angular spectrum method.

\subsection{Polarization and global coordinate mapping}
An electromagnetic beam in the plane wave spectrum representation also has polarization vector components along the propagation direction. To account for polarization let us expand Eq. (\ref{eq16}) along plane \((\mathbf{e}_1,\mathbf{e}_3)\) using the relation \(\mathbf{E}_t\cdot\mathbf{k}=0\) \cite{esam94}. Let \(\mathbf{k}=k_{\bar{x}}+k_{\bar{y}}+k_{\bar{z}}\) be the wave vector, where \(k=\|\mathbf{k}\|\) is the wave number and \(k_z=[k^2-k_{\bar{x}}^2-k_{\bar{y}}^2]^{1/2}\). Moreover let´s denote \(\mathbf{s}=\frac{1}{k}\mathbf{k}\) and define \(\mathbf{r}_{||}=\bar{x}\mathbf{e}_1+\bar{y}\mathbf{e}_2+|\bar{z}|\mathbf{e}_3\). Hence the summarized exponent in Eq. (\ref{eq16}) can be written as \(i(k_{\bar{x}}\bar{x}+k_{\bar{y}}\bar{y}+k_{\bar{z}}|\bar{z}|)=i\mathbf{k}\cdot\mathbf{r}_{||}=ik(\mathbf{s}\cdot\mathbf{r}_{||})\), obtaining
\begin{equation}\label{eq19}
\begin{array}{l}
\mathbf{E}_t(\mathbf{r};p,q)=\iint_{\mathbb{R}^2}e^{ik(\mathbf{\bar{s}}\cdot\mathbf{\bar{r}}_{||})}\Big[\mathbf{e}_1-\Big(\frac{k_{\bar{x}}}{k_{\bar{z}}}\Big)\mathbf{e}_3\Big]d k_{\bar{x}}d k_{\bar{y}}=\iint_{\mathbb{R}^2}e^{ik(\mathbf{\bar{s}}\cdot\mathbf{\bar{r}}_{||})}\mathbf{e}_0(k_{\bar{x}},k_{\bar{y}},k_{\bar{z}})d k_{\bar{x}}d k_{\bar{y}},
\end{array}
\end{equation}
where \(\mathbf{e}_0(k_{\bar{x}},k_{\bar{y}},k_{\bar{z}})\) defines the polarization. The local Cartesian coordinate system presents differential source Eq. (\ref{eq19}) in the base \((\mathbf{e}_1,\mathbf{e}_2,\mathbf{e}_3\)). When expanded to the VSH presentation, the influence of each differential source must be presented by means of position vector \(\mathbf{r}\) to preserve the spherical symmetry of VSH. Next, the global mapping of local coordinates is presented.

We have a simple connection \((\mathbf{r}-\mathbf{o})_{\text{glob}}=\Theta\,\mathbf{\bar{r}}_{\text{loc}}\) as in Eq. (\ref{eq6}). Let us define \(\mathbf{s}_{||}=\mathbf{s}\), when \(\bar{z}>0\), but, when \(\bar{z}<0\), the sign of \(\bar{z}-\)coordinates in \(\mathbf{s}\) is exchanged. Consequently, we can write 
\begin{equation}\label{eqSS}
\begin{array}{l}
\mathbf{\bar{s}}\cdot\mathbf{\bar{r}}_{||}=\mathbf{\bar{s}}_{||}\cdot\mathbf{\bar{r}}=\mathbf{\bar{s}}_{{||},\text{loc}}\cdot\mathbf{\bar{r}}_\text{loc}=\mathbf{t}\cdot\mathbf{\bar{r}}_\text{loc},
\end{array}
\end{equation}
where written briefly \(\mathbf{t}=\mathbf{\bar{s}}_{{||},{loc}}\). As a result, we get
\begin{equation}\label{eq25}
\begin{array}{l}
\mathbf{\bar{s}}\cdot\mathbf{\bar{r}}_{||}=\mathbf{\bar{s}}_{||}\cdot\mathbf{\bar{r}}=\mathbf{t}\cdot\mathbf{\bar{r}}_\text{loc}=\mathbf{t}\cdot\Theta^{T}(\mathbf{r}-\mathbf{o})_{\text{glob}}=\\ \mathbf{t}\cdot(\Theta^T \mathbf{r}_\text{glob}-\Theta^T \mathbf{o}_\text{glob})=\mathbf{t}\cdot(\mathbf{r}_\text{loc}-\mathbf{o}_\text{loc})=\mathbf{t}\cdot\mathbf{r}_\text{loc}-\mathbf{t}\cdot\mathbf{o}_\text{loc}.
\end{array}
\end{equation}
Based on Eq. (\ref{eq25}), the term \(e^{ik(\mathbf{s}\cdot\mathbf{\bar{r}}_{||})}\), which is in the local coordinate system, can be presented in the global coordinate system as

\begin{equation}\label{eq26}
\begin{array}{l}
e^{ik(\mathbf{s}\cdot\mathbf{\bar{r}}_{||})}=e^{ik(\mathbf{t}\cdot\mathbf{r}_{\text{loc}})}e^{-ik(\mathbf{t}\cdot\mathbf{o}_{\text{loc}})},
\end{array}
\end{equation}
where \(e^{-ik(\mathbf{t}\cdot\mathbf{o}_{\text{loc}})}=e^{-ik(\mathbf{s}_{||}\cdot\mathbf{o})}\) presents the position and phase shift of the source beam's focus (waist) from global origin to the source surface. Each local beam created from the differential source is presented in a global coordinate system.

\subsection{VSH expansion with modified beam shape coefficients}

Modified BSCs coefficients for each source point are derived for presenting any parameterized field distribution in a global coordinate system, including the locations and orientations of the sources. These coefficients map complicated coordinate geometries into one presentation. This allows synthesizing a total electromagnetic beam as a superposition of VSH expanded fields.

The integration wave numbers \(k_{\bar{x}},k_{\bar{y}}\) in Eq. (\ref{eq19}) are analogous to the BSC angles \cite{khaled}. Thus Eq. (\ref{eq19}) is  expressed in the direction cosine coordinate system as

\begin{equation}\label{eq20}
\begin{array}{l}
\mathbf{E}_t(\mathbf{r};p,q)=k^2\int_{0}^{\pi}\sin{\xi}\Big\{\int_{0}^{\pi}\mathbf{e}_0(\xi,\zeta)e^{ik(\mathbf{t}\cdot\mathbf{r}_{\text{loc}})}e^{-ik(\mathbf{t}\cdot\mathbf{o}_{\text{loc}})}\sin{\zeta}d\zeta\Big\} d\xi,
\end{array}
\end{equation}
where \(\xi\) and \(\zeta\) are direction cosines related to Cartesian coordinates as \(k\cos{\xi}=k_{\bar{x}}\) and \(k\cos{\zeta}=k_{\bar{y}}\). The polarization \(\mathbf{e}_0\) in terms of direction cosines is given as
\begin{equation}\label{eq21}
\begin{array}{l}
\mathbf{e}_0(\xi,\zeta)=\mathbf{e}_1-\frac{\cos{\xi}}{(1-\cos^2{\xi}+\cos^2{\zeta})^{1/2}}\mathbf{e}_3,
\end{array}
\end{equation}
and the vector \(\mathbf{s}\) is
\begin{equation}\label{eq22}
\begin{array}{l}
\mathbf{s}=\cos{\xi}\mathbf{e}_1+\cos{\zeta}\mathbf{e}_2+(1-\cos^2{\xi}+\cos^2{\zeta})^{1/2}\mathbf{e}_3.
\end{array}
\end{equation}
Then, Eq. (\ref{eq20}) is evaluated by numerical integration over a disk of radius \(k\) using the trapezoidal rule method, which is sufficiently accurate for periodic functions \cite{trap}, with uniform-width \(l\) \cite{khaled}
\begin{equation}\label{eq23}
\begin{array}{l}
\mathbf{E}_t\approx p^2\sum_i\sin{\xi_i}\sum_j\sin{\zeta_j}\mathbf{e}_{0_{ij}}(\xi_{t_i},\zeta_{t_j})e^{ik(\mathbf{t}_{ij}\cdot\mathbf{r}_{\text{loc}})}e^{-ik(\mathbf{t}_{ij}\cdot\mathbf{o}_{\text{loc}})},
\end{array}
\end{equation}
where indices \(i\) and \(j\) refer to individual plane waves with propagation constants \(k_{\bar{x}}^i\) and \(k_{\bar{y}}^j\) respectively to \(\bar{x}\) and \(\bar{y}\) directions, and \(p=\xi_{t+1}-\xi_t=\zeta_{t+1}-\zeta_t\)
is the step size of the numerical integration and \(\mathbf{t}_{ij}=(\mathbf{s}_{{||},{ij}})_\text{loc}\). The integration radius is limited to \(k\leq 2\pi/\lambda\) for obtaining only propagating waves, and the evanescence fields can be obtained by expanding the integration domain. 

The transformation matrix from local coordinates (\(\bar{x},\bar{y},\bar{z}\)) to global coordinates (\(x,y,z\)) is \(\Theta(p,q)\) as in Eq. (\ref{eq5}), 
 and the radiated electric field can be written at an arbitrary point \(\mathbf{r}\)  by VSHs  as \cite{khaled}
\begin{equation}\label{eq27}
\begin{array}{l}
\begin{aligned}
\mathbf{E}_t(\mathbf{r})&=p^2\sum_m\sum_n D_{mn}\Big[a^t_{\text{emn}}\mathbf{M}^1_{\text{emn}}(k\mathbf{r})+a^t_{\text{omn}}\mathbf{M}^1_{\text{omn}}(k\mathbf{r})\\&+b^t_{\text{emn}}\mathbf{N}^1_{\text{emn}}(k\mathbf{r})+b^t_{\text{omn}}\mathbf{N}^1_{\text{omn}}(k\mathbf{r})\Big],
\end{aligned}
\end{array}
\end{equation}
where \(\mathbf{M}^1_{\text{emn}},\mathbf{M}^1_{\text{omn}},\mathbf{N}^1_{\text{emn}}\) and \(\mathbf{N}^1_{\text{omn}}\) are the VSH of the 
first kind and \(D_{mn}\) is a normalization factor \cite{khaled}. For \(k\mathbf{r}\) we have to substitute its spherical coordinates \( (kr,\theta,\phi) \) and corresponding spherical base vectors \( (\mathbf{e}_\theta,\mathbf{e}_\phi) \), taken with respect to the local base \( (\mathbf{e}_1,\mathbf{e}_2,\mathbf{e}_3) \). They can be computed from the triple \(r_\text{loc}\). Modified BSCs \(a^t_{\text{emn}},a^t_{\text{omn}},b^t_{\text{emn}}\) and \(b^t_{\text{omn}}\) are  given as

\begin{equation}\label{eq29}
\begin{array}{l}
a^t_{\text{emn}}=\sum_i\sin{\xi_i}\sum_j\sin{\zeta_j}a^t_{{\text{emn}}_{ij}}e^{-ik(\mathbf{t}_{ij}\cdot\mathbf{o}_\text{loc})},\\
a^t_{\text{omn}}=\sum_i\sin{\xi_i}\sum_j\sin{\zeta_j}a^t_{{\text{omn}}_{ij}}e^{-ik(\mathbf{t}_{ij}\cdot\mathbf{o}_\text{loc})},\\
b^t_{\text{emn}}=\sum_i\sin{\xi_i}\sum_j\sin{\zeta_j}b^t_{{\text{emn}}_{ij}}e^{-ik(\mathbf{t}_{ij}\cdot\mathbf{o}_\text{loc})},\\
b^t_{\text{omn}}=\sum_i\sin{\xi_i}\sum_j\sin{\zeta_j}b^t_{{\text{omn}}_{ij}}e^{-ik(\mathbf{t}_{ij}\cdot\mathbf{o}_\text{loc})}.
\end{array}
\end{equation}
The BSCs for each \(ij-\) plane wave \(a^t_{{\text{emn}}_{ij}}, a^t_{{\text{omn}}_{ij}}, b^t_{{\text{emn}}_{ij}}\) and \(b^t_{{\text{omn}}_{ij}}\) are defined in Appendix B. Now the magnetic field is obtained by rearranging the modified BSCs and by multiplying by constant \(-i/\eta\) as
\begin{equation}\label{EQ27}
\begin{array}{l}
\begin{aligned}
\mathbf{H}_t(\mathbf{r})&=-\frac{ip^2}{\eta}\sum_m\sum_n D_{mn}\Big[b^t_{\text{emn}}\mathbf{M}^1_{\text{emn}}(k\mathbf{r})+b^t_{\text{omn}}\mathbf{M}^1_{\text{omn}}(k\mathbf{r})\\&+a^t_{\text{emn}}\mathbf{N}^1_{\text{emn}}(k\mathbf{r})+a^t_{\text{omn}}\mathbf{N}^1_{\text{omn}}(k\mathbf{r})\Big],
\end{aligned}
\end{array}
\end{equation}
The total incident electric field Eq. (\ref{eq15}) and total incident magnetic field from the source distribution, accounting for polarization as in Eq. (\ref{eq19}), is written as the integration of vector spherical harmonics expansions Eq. (\ref{eq27}) and Eq. (\ref{EQ27}) over the source surface as 
\begin{equation}\label{INC}
\begin{array}{l}
\mathbf{E}(\mathbf{r})_{inc}=\frac{1}{4\pi^2}\iint_{\Omega}E_0(p,q)\mathbf{E}_t(\mathbf{r};p,q)\|\frac{\partial o}{\partial p}\times\frac{\partial o}{\partial q}\|dpdq,\\
\mathbf{H}(\mathbf{r})_{inc}=\frac{1}{4\pi^2}\iint_{\Omega}E_0(p,q)\mathbf{H}_t(\mathbf{r};p,q)\|\frac{\partial o}{\partial p}\times\frac{\partial o}{\partial q}\|dpdq.
\end{array}
\end{equation}
Similar scattered field presentations mapped with the T-matrix method are presented as

\begin{equation}\label{SCAT}
\begin{array}{l}
\mathbf{E}(\mathbf{r})_{sca}=\frac{1}{4\pi^2}\iint_{\Omega}E_0(p,q)\mathbf{E}_{scat}(\mathbf{r};p,q)\|\frac{\partial o}{\partial p}\times\frac{\partial o}{\partial q}\|dpdq, \\
\mathbf{H}(\mathbf{r})_{sca}=\frac{1}{4\pi^2}\iint_{\Omega}E_0(p,q)\mathbf{H}_{scat}(\mathbf{r};p,q)\|\frac{\partial o}{\partial p}\times\frac{\partial o}{\partial q}\|dpdq
\end{array}
\end{equation}
where \(\mathbf{E}_{scat}\) and \(\mathbf{H}_{scat}\) are introduced in Appendix B.

In summary, we have expanded planar 2D ASM to 3D ASM by dividing the surface into infinitesimally small sub-regions used as local source points. We mapped source points' global location and orientation information into the modified BSCs and obtained the total fields as a superposition of VSH-expanded source points. Also, surface polarization can be modified by parameterization.
\section{Results}
The results are divided into three sections; first, we present the radiation patterns and non-singularity of the differential sources. In the second section, we demonstrate the practicality and accuracy of the presented theory by comparing the Gaussian beam´s morphology-dependent resonances from a sphere with the traditional 2D ASM and the presented 3D ASM, with a high agreement. In the last section, we synthesize an electromagnetic beam from an ellipsoidal surface, verify the incident field with physical optics simulations, and compute scattered fields from a 100-dielectric layer sphere.
\subsection{Differential source}
The total field from arbitrary surface \(\Omega\) is obtained as a superposition of differential source fields \(\mathbf{E}_{tot}=\sum_{\Omega}\mathbf{E}_t\). As shown in derivation, when the surface area of the differential source plane approaches zero, the result Eq. (\ref{eq16}) is analagous to the angular spectrum presentation of a curl of the magnetic dipole multiplied with scalar Green's function \(\mathbf{E}_t=-\nabla\times(-2\mathbf{n}\times\mathbf{e}_2G)\). As is well known, the fields created in this way are exact from an infinitely large plane, and the method accurately approximates field synthesis from curved surfaces when the RoC is above a wavelength.

The radiation pattern of a differential source \(\mathbf{E}_t\) is presented,  where the magnetic dipole is positioned to the origin along the y-axis. Radiated fields are computed at transverse xy-plane and yz-plane; see Figure (\ref{dipole}). This arrangement's main polarization is along the x-axis, and the field propagates to the positive z-direction. 
\begin{figure}[!ht]
\centering
\includegraphics[width=1\linewidth]{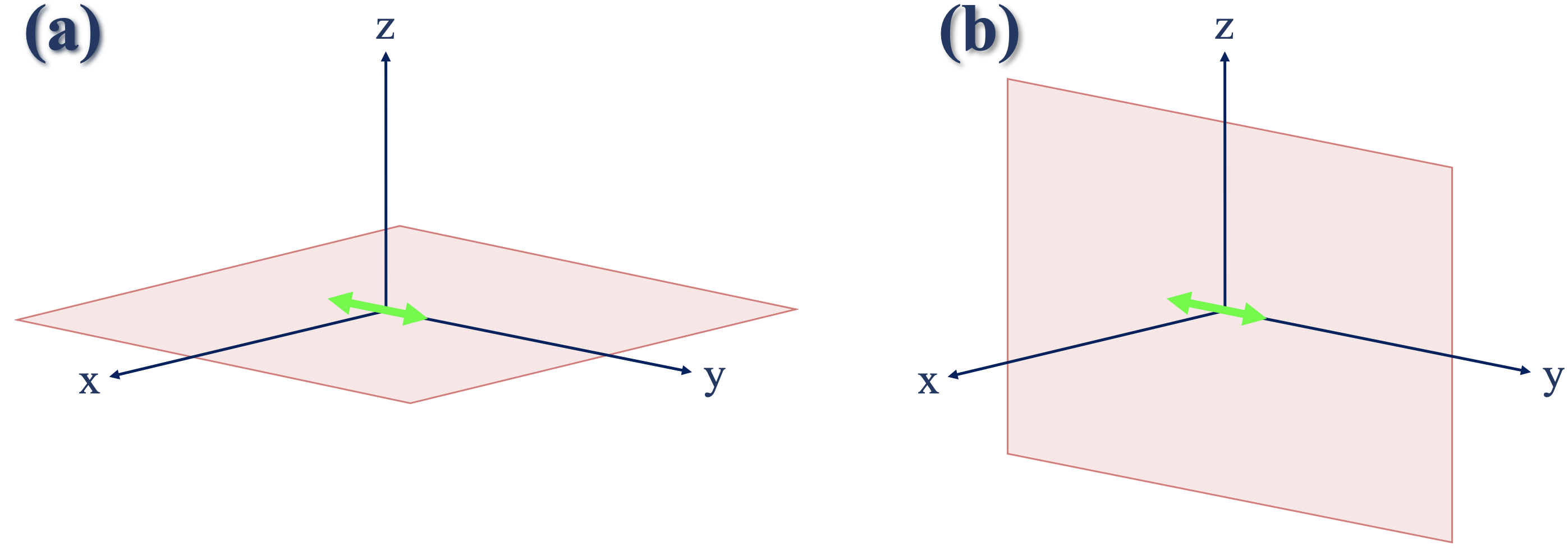}
\caption{\small Magnetic dipole in \(\mathbf{E}_t\) along the y-axis. The magnetic dipole is marked as a green arrow on the evaluation plane presented as a red area. On a) and b) are the simulation arrangements on xy- and yz-planes, respectively. }
\label{dipole}
\end{figure}\FloatBarrier
Figure (\ref{DiffSource}) presents the radiated fields from a differential source on xy- and yz-planes. 
\begin{figure}[!h]
\centering
\includegraphics[width=1\linewidth]{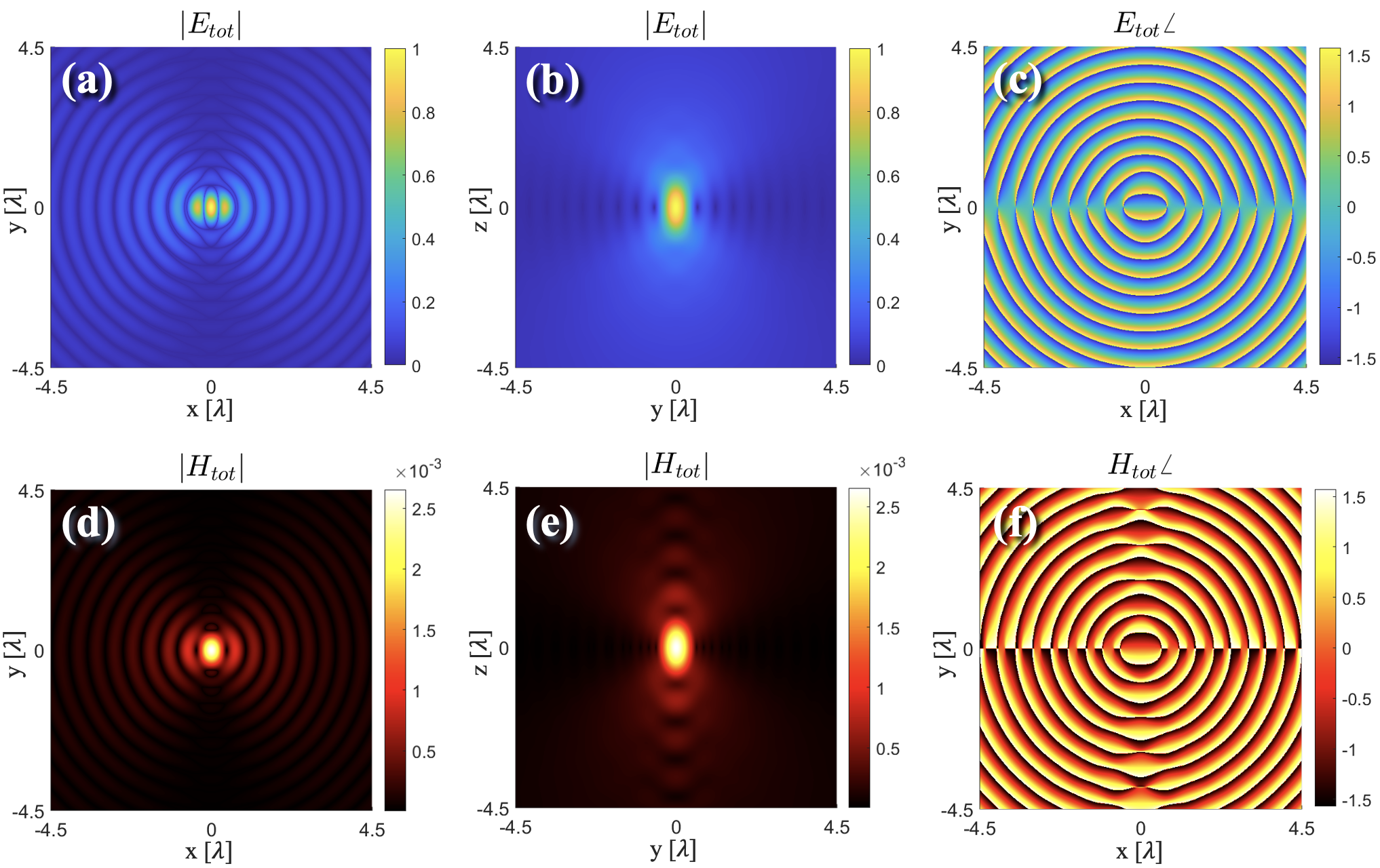}
\caption{\small Electromagnetic radiation from a source point, where a) electric and d) magnetic fields at the transverse xy-plane. On b) electric and e) magnetic fields on propagation yz-plane, and on c) electric and f) magnetic fields phases on propagation yz-plane.}
\label{DiffSource}
\end{figure}\FloatBarrier
Simulation shows the free placement of source points inside the computation domain due to the absence of singularities. The maximum amplitude of the electric fields is one without normalization.
\subsection{Morphology-dependent resonance test}
Mie scattering region is known for frequency-dependent backscatter intensity. This phenomenon is due to the scattering resonances, which trap energy temporally inside the sphere. Backscatter intensity is presented as a function of size parameter \(k\alpha\), where \(\k=k_0n\) is a refractive index dependent wavenumber inside the sphere and \(\alpha\) is the radius of the sphere. These backscatter resonances are called morphology-dependent resonances (MDRs) \cite{MDR1,MDR2}. MDRs are highly responsive to simulation errors and are a good test of the accuracy of the derived 3D ASM combined with VSH expansion \cite{MDR3}. 

High accuracy manifests with simulations, where a Gaussian beam is created from its beam waist and its field components on a plane outside the beam waist. In the latter case, each source point has varying propagation direction and complex amplitude, making field synthesis on this computation inefficient with traditional methods. MDRs are computed with both source plane locations when the beam waist is positioned at the origin of the homogeneous sphere and on the side of the coated sphere, defined precisely later. Likewise, MDRs from an individual source point are computed in both focus scenarios, see Figure (\ref{GaussianMr136}).

Simulations are executed by illuminating homogeneous and coated spheres by Gaussian beam in size parameter range \(k\alpha\in[32,36]\) with beam waist radius \(\omega_0=1.5\lambda\). Homogeneous spheres have a refractive index of \(n_r=1.36\), and coated spheres have a \(n_r=1.36\) core with a \(n_c=1.5\) shell with radius of \(0.7\ \alpha\). These values were selected from article \cite{esam94} for comparing results. In both scenarios, the Gaussian beam is first synthesized on the transverse Plane 1 at the beam waist from the planar Gaussian electric field distribution with the traditional 2D AS and the proposed 3D AS method with an exact match. Then the incident field is computed on Plane 2 at \(-5 \ \alpha\) distance from the origin. 
\begin{figure}[!h]
\centering
\includegraphics[width=0.8\linewidth]{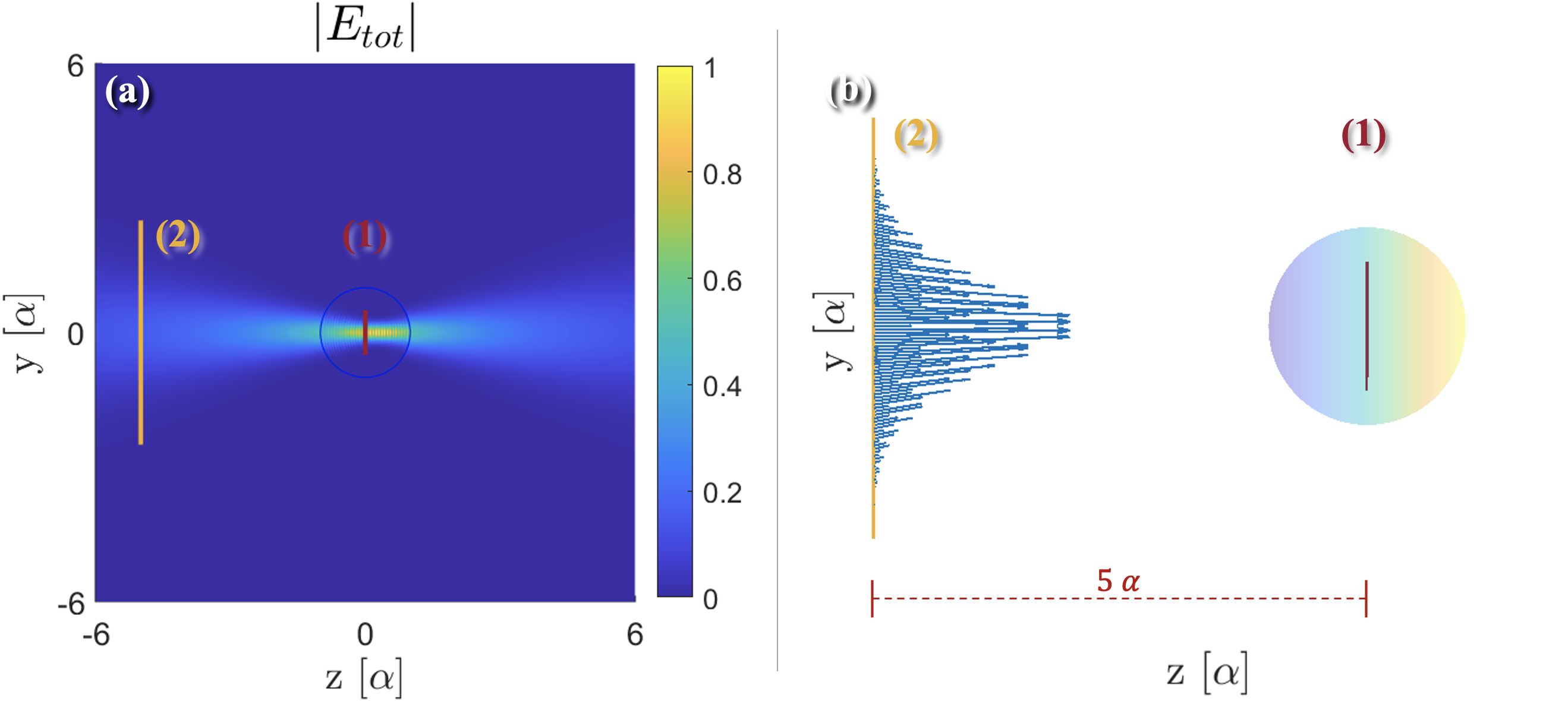}
\caption{\small Morphology-dependent resonance simulation of a sphere illuminated by Gaussian beam. On a) is a Gaussian beam synthesis on the source Plane 1 at the origin, and field components are computed on Plane 2. On b) is Simulation 2, where Plane 2 is used as a 3D ASM source, where each source point's local propagation direction is defined by the local Poynting vector (marked as blue), and source points are multiplied by complex amplitudes, obtained from Simulation 1.}
\label{GaussianMr136}
\end{figure}\FloatBarrier
The second simulation synthesizes the same Gaussian beam from Plane 2 with the 3D ASM using equal \(\lambda/6\) sampling space in the horizontal and vertical directions. Now each source point has locally varying propagation direction obtained as a Poynting vector \(\mathbf{P_{inc}}=0.5\Re\big\{\mathbf{E_{inc}}\times\mathbf{H^*_{inc}}\big\}\) from the incident field synthesized from the Plane 1. Also, each source point is multiplied by the local complex electric field \(\mathbf{E_{inc}}\) obtained from the first simulation, see simulation flow with source Planes 1 and 2 in Figure (\ref{SM}). \FloatBarrier
\begin{figure}[!ht]
\centering
\includegraphics[width=1\linewidth]{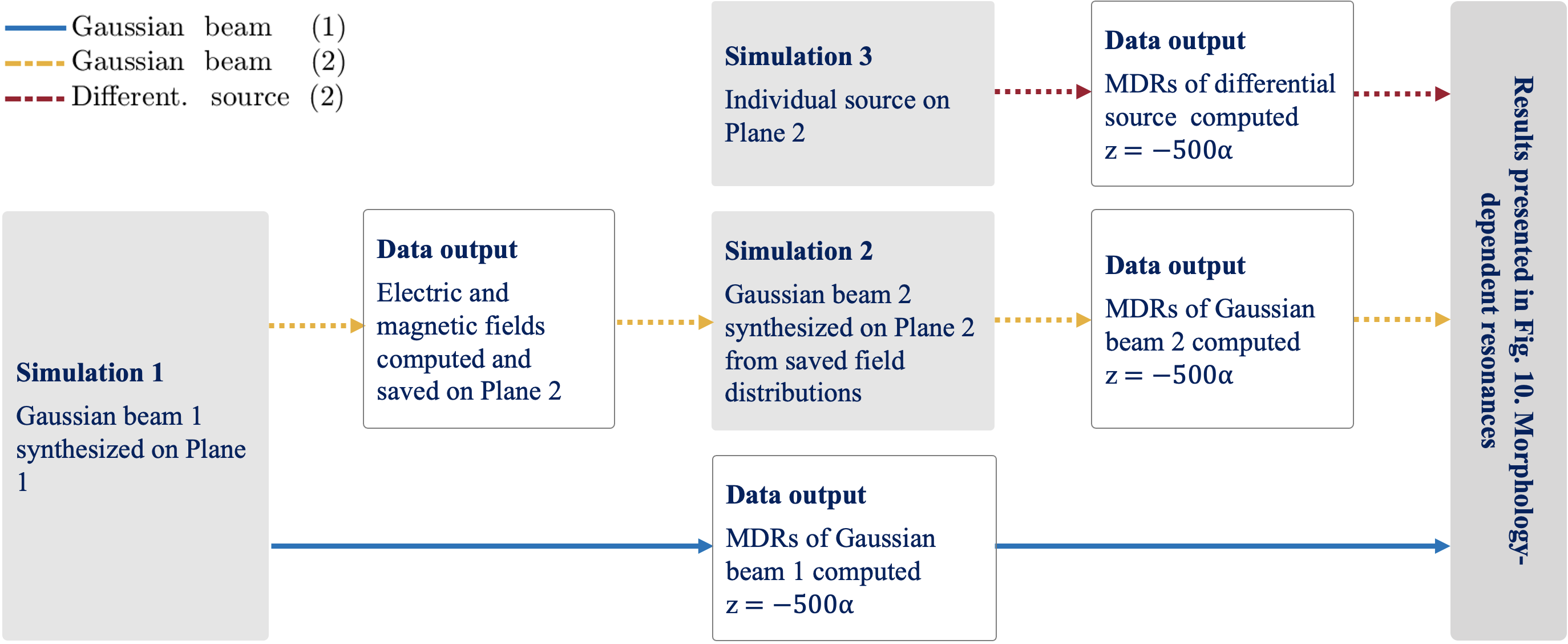}
\caption{\small Simulation flow graph. }
\label{SM}
\end{figure}
The Cartesian electric and magnetic field components of the incident field at Plane 2 are presented in Figure (\ref{GaussianCom}). These components are synthesized from Plane 1. Gaussian beam MDRs synthesized from both source plane scenarios are computed at -500 \(\alpha\) distance from the origin \cite{khaled} and compared in Figure (\ref{MDR}). The resonances are labeled as \(TE_{n,l}\) or \(TM_{n,l}\), where \(n\) indicates the mode number, and \(l\) the number of radial peaks in the angle-averaged internal energy density distribution \cite{khaled}.  
\begin{figure}[!h]
\centering
\includegraphics[width=1\linewidth]{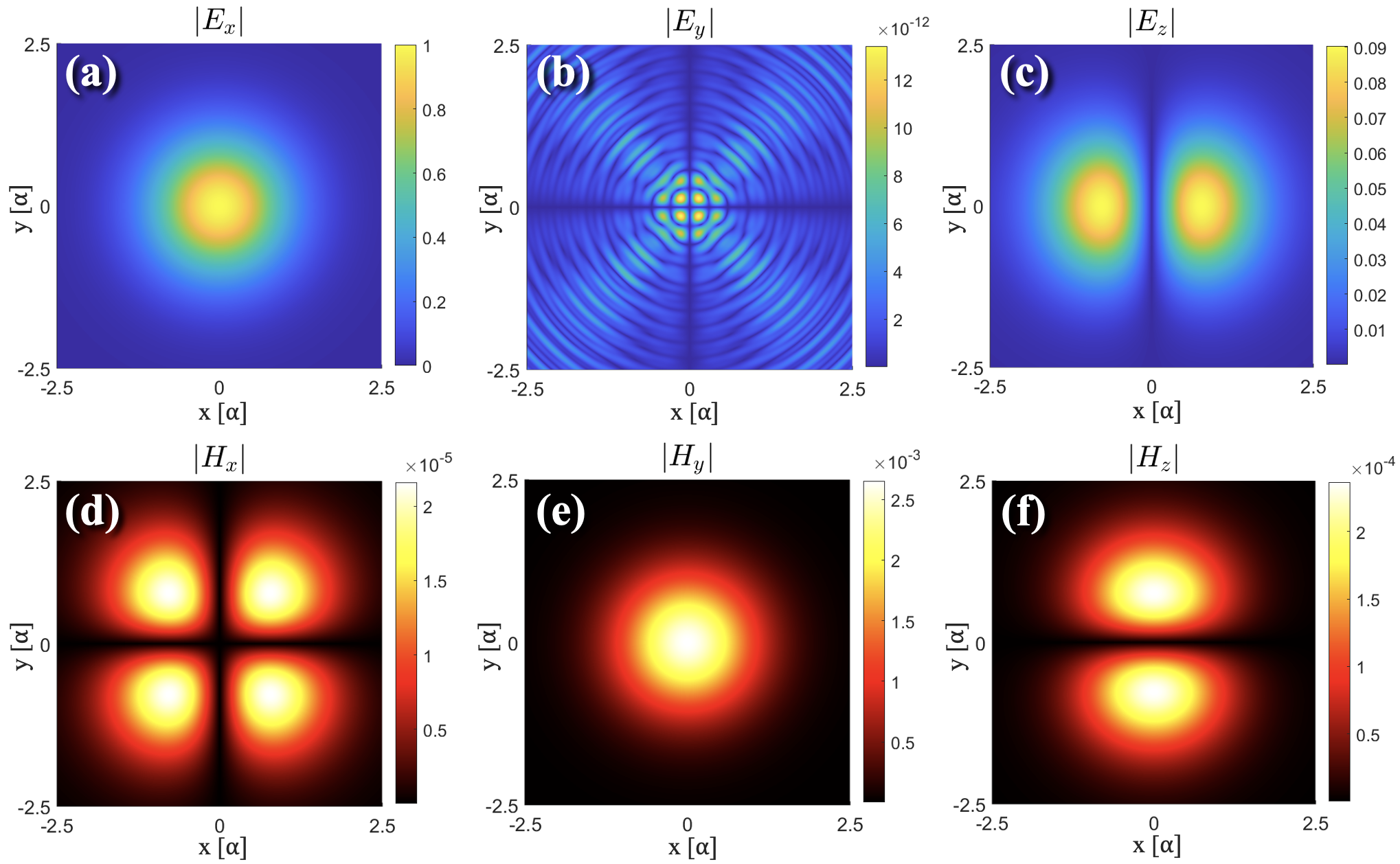}
\caption{\small Electric and magnetic field components of Gaussian beam one at the plane 2. On a)-c) are electric field components, and on d)-f) are magnetic field components. All the components are normalized by \(_{\max}|E_x|\).}
\label{GaussianCom}
\end{figure}\FloatBarrier
First, the homogeneous sphere is illuminated with a Gaussian beam focused at the origin. Second, a coated sphere is illuminated with a Gaussian beam focused on the side of the coated sphere \(y=\alpha\) to verify even more specific resonance behavior. Also, the MDRs from a source point on Plane 2 in both scenarios are presented to underline the resonance difference with varying beam shapes.
\begin{figure}[!h]
\centering
\includegraphics[width=1\linewidth]{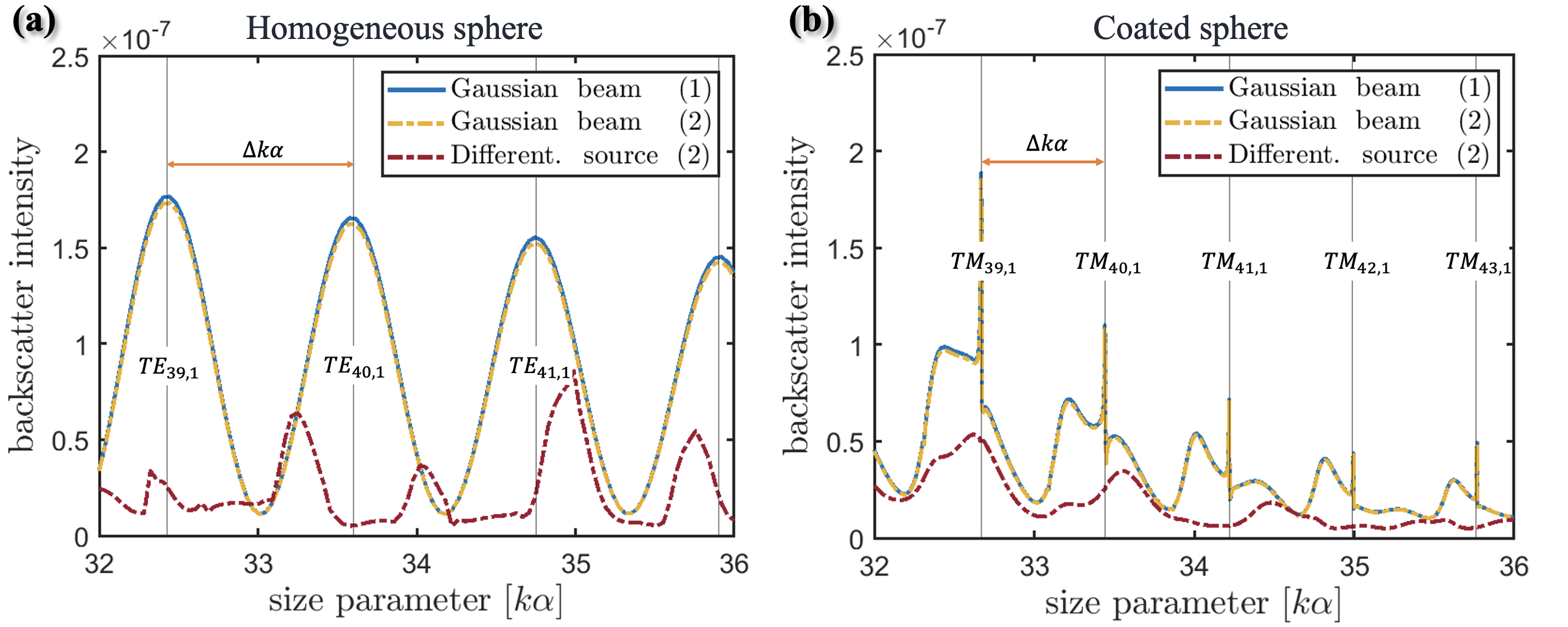}
\caption{\small Morphology-dependent resonances of a Gaussian beam and a differential source illumination presented in Figure (\ref{DiffSource}). On a) is a situation where the Gaussian beam waist is positioned at the origin of the homogeneous sphere with refractive index \(n_r=1.36\). On b)  is a situation where the Gaussian beam waist is on the side of the coated sphere, with a core refractive index of  \(n_r=1.5\) and shell as  \(n_r=1.36\). The backscatter intensities of one source point are scaled to fit the same plot.}
\label{MDR}
\end{figure}\FloatBarrier
Electromagnetic backscatter intensities show excellent MDRs matching from both source plane synthesis scenarios. The resonance shape, location, and peak spacing in Figure \ref{MDR} match with values from \cite{esam94}. In Figure (\ref{MDR} a), the peaks correspond to modes \(TE_{39,1},TE_{40,1}\), and \(TE_{41,1}\) respectively, and the spacing between the peaks satisfies the \(\Delta k\alpha=\pi/2m_r\approx1.15\) condition. Also, in Figure (\ref{MDR} b), the MDRs peaks from coated sphere correspond to the modes \(TM_{39,1},TM_{40,1},TM_{41,1}\) and \(TM_{42,1}\) respectively. 

The MDRs of an individual differential source do not match the Gaussian beam resonances. However, a total beam created as a superposition of the differential sources from the surface \(\Omega\) synthesizes the original beam with highly matching MDRs. This leads to the conclusion that the presented 3D AS method combined with the VSH expansion synthesizes electromagnetic scattering from multilayered spheres with high accuracy.\FloatBarrier

\subsection{Elliptical source distribution}
An elliptical surface was used as an example of a source distribution to synthesize a focused linearly polarized incident field using \(\lambda/6\) sampling, see Figure (\ref{Ell}). The incident field is compared to the Physical optics simulations at the yz-plane with an excellent agreement to less than a -39 dB difference in amplitude and less than the 0.1-degree difference in phase. Also, the scattered fields are computed from a \(\alpha=\) 7.8 mm radius sphere in xz-plane with 100 dielectric layers with linearly varying permittivity \(\epsilon_r\in[1-i0.001,\ 3-i0.01]\) from the surface to the center.
\begin{figure}[!ht]
\centering
\includegraphics[width=0.55\linewidth]{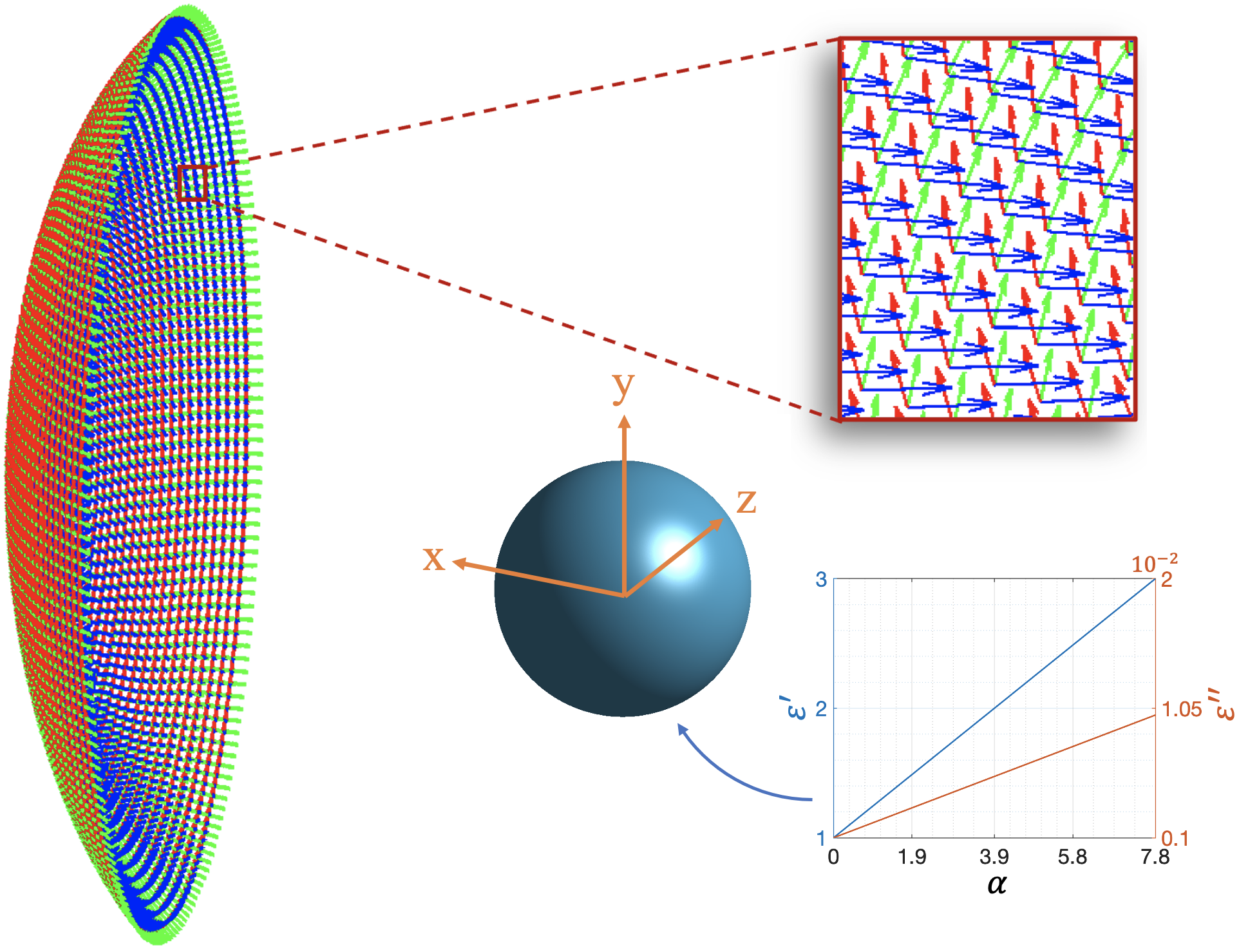}
\caption{\small Elliptical source distribution with 1424 source points. Local Cartesian base vectors (\(\mathbf{e}_1,\mathbf{e}_2,\mathbf{e}_3\)) are marked as red, green, and blue arrows, respectively.}
\label{Ell}
\end{figure}\FloatBarrier
Base vectors for elliptical surfaces are derived from the Eq. (\ref{eq1} - \ref{eq4}) with the parametrization
\begin{equation}\label{eqEllips}
\begin{array}{l}
\Omega = \Big\{\mathbf{o}(\theta,\phi)=(a_x\sin{\theta}\cos{\phi}-x_0)\mathbf{e}_x+(b_y\sin{\theta}\sin{\phi}-y_0)\mathbf{e}_y -(c_z\cos{\theta}+z_0)\mathbf{e}_z \ | \\ \theta\in[\pi/6,2\pi/3], \ \phi\in[-\pi/6,\pi/6]\Big\},
\end{array}
\end{equation}
with axial scaling factors \(a_x=0.6,\ b_y=1.5,\ c_z=0.8\) and an unscaled radius \(r=3\). The origin of the ellipsoid is shift \((x_0,y_0,z_0)=(3,0,-0.5)\) compared to the sphere. 

Figure (\ref{EllTot}) illustrate the amplitude and phase of the total field \(\mathbf{E}_{tot}=\mathbf{E}_{inc}+\mathbf{E}_{scat}\) on xz-plane where \(x,z\in[-5a,5a]\). Simulations are performed with \(500\times 500\) evaluation points with size parameter of \(k\alpha=13\) and VSH modes of \(N=50\).
\begin{figure}[!ht]
\centering
\includegraphics[width=1\linewidth]{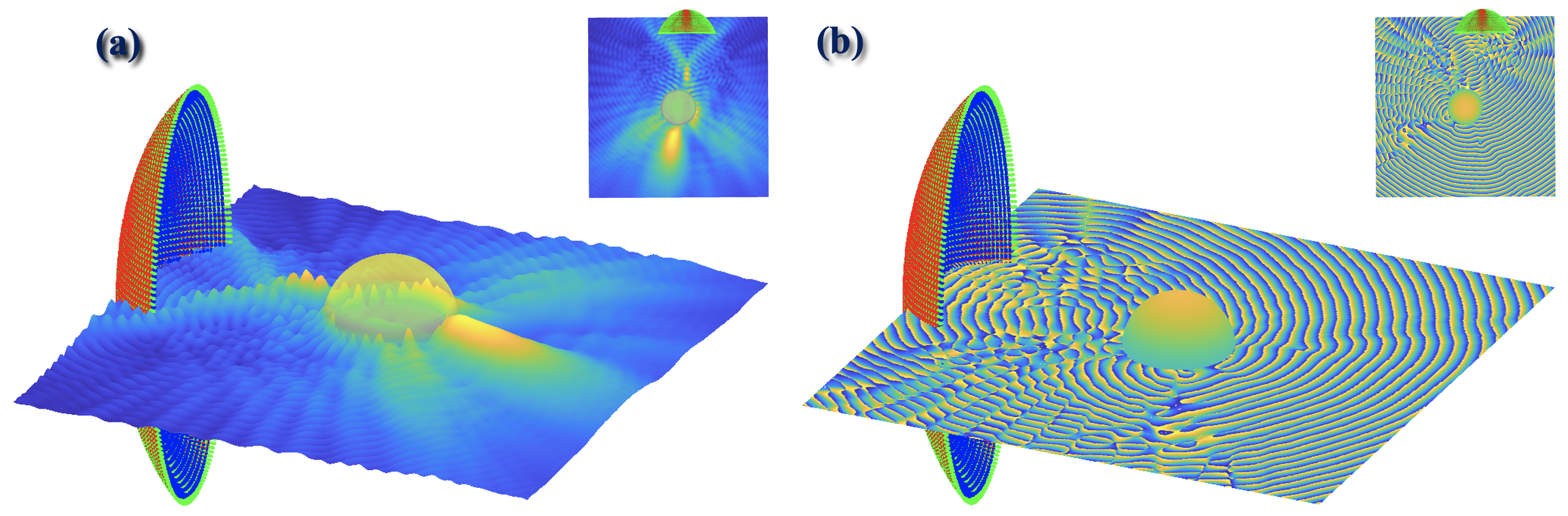}
\caption{\small Total fields from a 100-layer sphere illuminated with the incident beam defined in the ellipsoidal surface. a) total field magnitude, and b) total field phase.}
\label{EllTot}
\end{figure}\FloatBarrier
\section{Conclusions}
We propose a method to synthesize electromagnetic beams by arbitrary surface electric field distributions. The goal is to compute scattering from multilayered spheres illuminated by incident fields that can be modified freely by adjusting the source distribution's shape, amplitude, phase, and location. The incident field is presented in VSH expansion defined by the BSCs, and the scattered field is obtained by mapping the incident field BSCs with the T-matrix method. 

The goal is achieved by computing BSCs with the proposed 3D ASM derived from the nominal 2D ASM in the theory chapter. The key concept is to divide the arbitrary surfaces into differential elements, which are used as local AS sources. Location and orientation information of AS sources is transformed in a global coordinate system and mapped into the BSCs to preserve spherical symmetry for VSH expansion. As a result, the total incident field is obtained as a superposition of VSH-expanded differential sources, in which orientation, locations, and complex amplitude can be adjusted. 

The proposed method approximates incident field synthesis from an arbitrary surface, considering the wavelength compared to the surface details. Then the scattered fields are rigorously computed from approximated incident fields with the Mie theory. In addition, two essential points of the method are (1) 3D ASM is analogous to the magnetic dipole presentation of the Stratton-Chu method, which satisfies Maxwell's equations and well approximates scattering/radiation from surfaces with RoC larger than \(\lambda/2\), and (2) when the 3D ASM is applied to planar surfaces, it returns the original 2D AS representation. 

Simulations verify the proposed method's incident field synthesis and scattering accuracy. First, we demonstrate the source points magnetic dipole behavior with the radiation patterns without singularities. Then we verify the scattering accuracy by comparing the MDRs resonances from spheres illuminated by the Gaussian beam created from the beam waist by nominal 2D ASM and by more complex electric field distribution with the proposed 3D ASM. The MDRs resonances of the Gaussian beam synthesized by the 3D ASM were in excellent agreement with the reference values, validating the high accuracy on the scattered fields. At last, we demonstrate the method's practicality by synthesizing an incident field from an elliptical surface and computing scattered fields from a 100-layer sphere. Additionally, we present mathematical proof to support the theory.

The novelty of this approach lies in a straightforward simulation algorithm, where the incident field can be defined at any parametrized surface. Furthermore, the parameterized surface can be located inside the computational domain, on the surface, or inside the spherical scatterer without restrictions and source singularities. Unrestricted placement of the source distribution has clear advantages, especially in the inverse beam design, where the desired beam is defined on the sub-region on the spherical scatterer. Additionally, the synthesized beam can be reradiated from any evaluation surface to another, enabling beam simulation between optical elements with the ability to consistently compute the scattered fields from multilayered spheres.
\section{Appendix A}

\begin{theorem}
Let the function \(E_0\) be continuous in a closed segment of surface \(\Omega\) and let \(\mathbf{r}\in\mathbb{R}^3\) be such that \((\mathbf{r}-\mathbf{o})\cdot(\frac{\partial\mathbf{o}}{\partial p}\times\frac{\partial\mathbf{o}}{\partial q})\neq0\) for all \(\mathbf{o}\in\Omega\). The field \(E\) created by function \(E_0\) is presented by Eq.(\ref{eq15}) and Eq.(\ref{eq16}) at point \(\mathbf{r}\). 
\end{theorem}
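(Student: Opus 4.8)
The plan is to prove that the limit in Eq.~(\ref{eq12}) defining $\mathbf{E}_1(\mathbf{r})$ exists, is independent of the chosen partition $\cup_c\Omega_c=\Omega$, and equals the integral in Eqs.~(\ref{eq15})--(\ref{eq16}); the hypothesis on $\mathbf{r}$ supplies the two uniform facts that drive the argument. Put $\bar z(\mathbf{o})=(\mathbf{r}-\mathbf{o})\cdot\mathbf{e}_3$; since $\mathbf{e}_3$ is parallel to $\frac{\partial\mathbf{o}}{\partial p}\times\frac{\partial\mathbf{o}}{\partial q}$, the assumption says $\bar z(\mathbf{o})\neq 0$ for every $\mathbf{o}\in\Omega$, and as $\Omega$ is compact and the parametrization is $C^1$, $|\bar z(\mathbf{o})|\geq\delta>0$ uniformly on $\Omega$. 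This lower bound makes the inner $k$-integral in Eq.~(\ref{eq16}) absolutely convergent: over the propagating disk $k_{\bar x}^2+k_{\bar y}^2\leq k^2$ the integrand has modulus $1$ on a bounded set, and over the evanescent region its modulus is $e^{-|\bar z|\sqrt{k_{\bar x}^2+k_{\bar y}^2-k^2}}$, whose integral over $\{k_{\bar x}^2+k_{\bar y}^2>k^2\}$ is $2\pi/|\bar z|^2\leq 2\pi/\delta^2$. Hence $\mathbf{E}_t(\mathbf{r};p,q)$ is finite, and a further dominated-convergence argument with the same uniform majorant, together with continuity of $\mathbf{e}_1(p,q)$ and $\mathbf{o}(p,q)$, shows $(p,q)\mapsto\mathbf{E}_t(\mathbf{r};p,q)$ is continuous, so the right-hand side of Eq.~(\ref{eq15}) is a convergent Riemann integral of a bounded continuous integrand over the compact parameter rectangle. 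This establishes that Eqs.~(\ref{eq15})--(\ref{eq16}) are well defined.

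Next I would establish the local estimate that rigorously replaces the Dirac-delta heuristic of Eq.~(\ref{eq13}). Take the partition to be induced by a partition of the parameter rectangle into small cells $R_c$; the tangent-plane projection $\Omega_t^c$ of the patch $\mathbf{o}(R_c)$ has area $|\Omega_t^c|=\|\frac{\partial\mathbf{o}}{\partial p}\times\frac{\partial\mathbf{o}}{\partial q}\|\,|R_c|+o(|R_c|)$, because orthogonal projection onto the tangent plane is the identity on tangent directions to first order. Since the local frame is centred at the patch point, $E_0^t$ is supported near $\bar x=\bar y=0$; hence $|\mathcal{F}\{E_0^t\}(k_{\bar x},k_{\bar y})|\leq\|E_0\|_{L^\infty(\Omega)}|\Omega_t^c|$ for all $(k_{\bar x},k_{\bar y})$, while for each fixed $(k_{\bar x},k_{\bar y})$ one has $|\Omega_t^c|^{-1}\mathcal{F}\{E_0^t\}(k_{\bar x},k_{\bar y})\to E_0(p_c,q_c)$ as the mesh shrinks, by uniform continuity of $E_0$ and because $e^{-i(k_{\bar x}\bar x+k_{\bar y}\bar y)}\to 1$ on the shrinking support. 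Substituting into Eq.~(\ref{eq10}) and applying dominated convergence with the cell-independent majorant $\|E_0\|_{L^\infty}\big(\mathbf{1}_{k_{\bar x}^2+k_{\bar y}^2\leq k^2}+e^{-\delta\sqrt{k_{\bar x}^2+k_{\bar y}^2-k^2}}\big)$ (integrable thanks to $|\bar z|\geq\delta$) yields $E_t(\mathbf{r})\,\mathbf{e}_1=\frac{1}{4\pi^2}E_0(p_c,q_c)\,\mathbf{E}_t(\mathbf{r};p_c,q_c)\,|\Omega_t^c|+o(|\Omega_t^c|)$, uniformly over the cells; note that, unlike the normalization in Eq.~(\ref{eq13}), this needs no nonvanishing assumption on $E_0(p_c,q_c)$. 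Summing over cells, the leading terms form a Riemann sum for $\frac{1}{4\pi^2}\iint_\Omega E_0(p,q)\mathbf{E}_t(\mathbf{r};p,q)\|\frac{\partial\mathbf{o}}{\partial p}\times\frac{\partial\mathbf{o}}{\partial q}\|\,dp\,dq$, which converges to that integral by continuity of the integrand, and the error terms are each $o(|\Omega_t^c|)$ uniformly and hence sum to $o(|\Omega|)=o(1)$ as the mesh tends to zero; this is exactly Eqs.~(\ref{eq15})--(\ref{eq16}).

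I expect the main obstacle to be the \emph{uniform} version of the replacement $\mathcal{F}\{E_0^t\}\approx E_0(p_c,q_c)|\Omega_t^c|$: pointwise-in-$k$ convergence is immediate, but the tail of the $k$-integral must be controlled simultaneously for all cells of a given partition. I would handle this by splitting the $k$-plane at $|k_\perp|=\Lambda$: on $|k_\perp|\leq\Lambda$, use uniform continuity of $E_0$ and the vanishing patch diameter (giving $|e^{-i(k_{\bar x}\bar x+k_{\bar y}\bar y)}-1|\leq\Lambda\cdot\mathrm{diam}$) to bound the error by $o(|\Omega_t^c|)$ uniformly; on $|k_\perp|>\Lambda$, use the evanescent bound $e^{-\delta|k_\perp|}$ (legitimate because $|\bar z|\geq\delta$ for \emph{every} cell) to make the tail contribution $\leq\varepsilon|\Omega_t^c|$ once $\Lambda$ is large, again uniformly. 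The only other point requiring care is the elementary geometric fact that a $C^1$ surface patch and its orthogonal projection onto the tangent plane at the base point have the same area to leading order, which is what converts the superposition sum of point-source fields into the surface integral of Eq.~(\ref{eq15}).
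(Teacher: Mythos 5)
Your proposal is correct and follows essentially the same route as the paper's proof: both arguments rest on the uniform lower bound $|\bar z|\geq \bar z_{\min}>0$ supplied by the hypothesis on $\mathbf{r}$ (which makes the $k$-integral absolutely convergent with a cell-independent majorant), on uniform continuity of $E_0$ to justify replacing $E_0^t$ by a constant on each patch, and on splitting the $k$-plane into a large disk (where $e^{-i(k_{\bar x}\bar x+k_{\bar y}\bar y)}\approx 1$ on the shrinking support) plus an evanescent tail controlled by $\bar z_{\min}$ — exactly the paper's $\bar B$ versus $\mathbb{R}^2\setminus\bar B$ decomposition. The only substantive addition is that you make explicit the identification $|\Omega_t^c|=\|\frac{\partial\mathbf{o}}{\partial p}\times\frac{\partial\mathbf{o}}{\partial q}\||R_c|+o(|R_c|)$ between projected patch area and the parametric surface element, a step the paper uses implicitly when passing from the Riemann sum to Eq.~(\ref{eq15}).
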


\begin{proof}
Let us define \(\epsilon>0\). Let us show first that when the partition {\(\Omega_t\)} is small enough, in other words, the areas \(|\Omega_t|\) are small enough regardless of t, by replacing the function \(E_0^t\) supported by each piece \(\Omega_t\) with a constant \(E_0^t(\bar{0},\bar{0})=E_0(p,q)\), the error made in Eq. (\ref{eq10}) is smaller than \(\epsilon/4\). Function \(E_0\) is uniformly continuous on the compact segment of surface \(\Omega\), and based on this, with sufficiently small \(|\Omega_t|\) it holds
 
\begin{equation}\label{eq33}
\begin{array}{l}
|E^t_0(\bar{x},\bar{y})-E_0(p,q)|\leq\frac{\epsilon}{4I|\Omega|}
\end{array}
\end{equation}
for all \(\bar{x},\bar{y}\in\Omega_t\) and \(t\), where the constant \(I\) will be defined later on. In the Fourier transform on the plane holds

\begin{equation}\label{eq34}
\begin{array}{l}
|\mathcal{F}\left\{E_0^t\right\}-\mathcal{F}\left\{E_0(p,q)\right\}|
\leq\iint_{\Omega_t}|E^t_0(\bar{x},\bar{y})-E_0(p,q)||e^{-i(k_{\bar{x}}\bar{x}+ k_{\bar{y}}\bar{y})}|d\bar{x}d\bar{y}\\\leq\iint_{\Omega_t}\frac{\epsilon}{4I\Omega}|e^{-i(k_{\bar{x}}\bar{x}+ k_{\bar{y}}\bar{y})}|d\bar{x}d\bar{y}
\leq\iint_{\Omega_t}\frac{\epsilon}{4I|\Omega|}d\bar{x}d\bar{y}=\frac{\epsilon|\Omega_t|}{4I|\Omega|},
\end{array}
\end{equation}
for all \((k_{\bar{x}},k_{\bar{y}})\in\mathbb{R}^2\). Function \(e^{i|\bar{z}|\sqrt{k^2-k_{\bar{x}}^2-k_{\bar{y}}^2}}\) is integrable on \((k_{\bar{x}},k_{\bar{y}})\) plane, because outside of the circle \(k_{\bar{x}}^2+k_{\bar{y}}^2=k^2\), the exponential becomes real and negative. Let \(I\) be, at first, a continuous elementary function

\begin{equation}\label{eq35}
\begin{array}{l}
I(|\bar{z}|)=\iint_{\mathbb{R}^2}|e^{i|\bar{z}|\sqrt{k^2-k_{\bar{x}}^2-k_{\bar{y}}^2}}|dk_{\bar{x}}dk_{\bar{y}},
\end{array}
\end{equation}
where \(\bar{z}=\bar{z}(\mathbf{o})=\bar{z}(p,q)\). Let r be as expected. Then because of Eq. (\ref{eq2}-\ref{eq3}, \ref{eq5}-\ref{eq7}) we have \((\mathbf{r}-\mathbf{o})\cdot\mathbf{e}_3(\mathbf{o})\neq0\) for all \(\mathbf{o}\in\Omega\). Thus, a positive, continuous function \(|\bar{z}(p,q)|\) gets its minimum value \(\bar{z}_{min}>0\) on the compact set \(\left\{(p,q)|p\in[p_1,p_2],q\in[q_1,q_2]\right\}\). Especially \(|\bar{z}|\geq \bar{z}_{min}\) for all \(t\) regardless of partitioning. Finally, the constant \(I\) is defined 

\begin{equation}\label{eq36}
\begin{array}{l}
I=I(z_{min})<\infty.
\end{array}
\end{equation}
Due the monotone of integral, \(I(|\bar{z}(p,q)|)\leq I\) on all \(t\) regardless of partitions. The upper limit for the error due \(E_0(p,q)\) on the Eq. (\ref{eq10}) is obtained as

\begin{equation}\label{eq37}
\begin{array}{l}
\frac{1}{4\pi^2}\iint_{\mathbb{R}^2}|\mathcal{F}\left\{E_0^t\right\}-\mathcal{F}\left\{E_0(p,q)\right\}|(k_{\bar{x}},k_{\bar{y}})|e^{i|\bar{z}|\sqrt{k^2-k_{\bar{x}}^2-k_{\bar{y}}^2}}|dk_{\bar{x}}dk_{\bar{y}}\\
\leq\frac{1}{4\pi^2}\frac{\epsilon|\Omega_t|}{4I|\Omega|}\iint_{\mathbb{R}^2}|e^{i|\bar{z}|\sqrt{k^2-k_{\bar{x}}^2-k_{\bar{y}}^2}}|dk_{\bar{x}}dk_{\bar{y}}\\\leq\frac{1}{4\pi^2}\frac{\epsilon|\Omega_t|}{4I|\Omega|}I\leq\frac{\epsilon|\Omega_t|}{4|\Omega|},
\end{array}
\end{equation}
where Eq. (\ref{eq33}) and Eq. (\ref{eq36}) have been used. Because of \(||\mathbf{e}_1||=1\) and the triangle inequality the error in Eq. (\ref{eq11}) is

\begin{equation}\label{eq38}
\begin{array}{l}
err_1\leq\sum_t\frac{\epsilon|\Omega_t|}{4|\Omega|}=\frac{\epsilon}{4}.
\end{array}
\end{equation}
In the above approximation, a constant \(E_0(p,q)\) is used on piece \(\Omega\); as a function, it is zero on the plane outside of the piece. Let us denote this function simply as \(E_0(p,q)\). Next, we consider the Fourier transform of the function \(E_0(p,q)\). Because function \(e^{i|\bar{z}|\sqrt{k^2-k_{\bar{x}}^2-k_{\bar{y}}^2}}\) is integrable, there can be found an origin centered closed disk \(\bar{B}\) such that
\begin{equation}\label{eq39}
\begin{array}{l}
\left[max_\Omega|E_0|\right]\iint_{\mathbb{R}^2/\bar{B}}|e^{i|\bar{z}|\sqrt{k^2-k_{\bar{x}}^2-k_{\bar{y}}^2}}|dk_{\bar{x}}dk_{\bar{y}}<\frac{\epsilon}{4|\Omega|}
\end{array}
\end{equation}
for all \(t\) because \(|\bar{z}|\geq \bar{z}_{min}\). For all \((k_{\bar{x}},k_{\bar{y}})\in\mathbb{R}^2\) it holds

\begin{equation}\label{eq40}
\begin{array}{l}
|\mathcal{F}\left\{E_0(p,q)\right\}(k_{\bar{x}},k_{\bar{y}})|\\\leq \left[max_\Omega|E_0|\right]\iint_{\Omega_t}1dk_{\bar{x}}dk_{\bar{y}}=max|E_0||\Omega_t|.
\end{array}
\end{equation}
Let us now consider the error (\(err_2\)) made in integral Eq. (\ref{eq10}) when \(\mathcal{F}\left\{E_0(p,q)\right\}\) is replaced by the constant \(E_0(p_t,q_t)|\Omega_t|\). Under the estimates Eq. (\ref{eq39}) and Eq. (\ref{eq40}) it follows

\begin{equation}\label{eq41}
\begin{array}{l}
\iint_{\mathbb{R}^2/\bar{B}}|\mathcal{F}\left\{E_0(p,q)\right\}-E_0(\theta,\phi)|\Omega_t|||e^{i|\bar{z}|\sqrt{k^2-k_{\bar{x}}^2-k_{\bar{y}}^2}}|dk_{\bar{x}}dk_{\bar{y}}\\
\leq\iint_{\mathbb{R}^2/\bar{B}}2\left[max_\Omega|E_0|\right]|\Omega_t||e^{i|\bar{z}|\sqrt{k^2-k_{\bar{x}}^2-k_{\bar{y}}^2}}|dk_{\bar{x}}dk_{\bar{y}}\\
\leq 2|\Omega_t|\frac{\epsilon}{4|\Omega|}=\frac{\epsilon|\Omega_t|}{2|\Omega|}.
\end{array}
\end{equation}

For all \((k_{\bar{x}},k_{\bar{y}})\in\bar{B}\) and (\(\bar{x},\bar{y})\in\Omega_t\), we can estimate by continuity of the function

\begin{equation}\label{eq42}
\begin{array}{l}
\left[max_\Omega|E_0|\right]|e^{-i(k_{\bar{x}}\bar{x}+k_{\bar{y}}\bar{y})}-1|\leq\frac{\epsilon}{4|\bar{B}|\Omega|}
\end{array}
\end{equation}
always when \(|\Omega_t|\) is small enough, because then \(\bar{x}\approx0, \bar{y}\approx0\) and hence \(k_{\bar{x}}\bar{x}+k_{\bar{y}}\bar{y}\approx0\ -\) that is the assignment of \(\bar{B}\). Then in the Fourier transform at point \((k_{\bar{x}},k_{\bar{y}})\in\bar{B}\) it holds

\begin{equation}\label{eq43}
\begin{array}{l}
|\mathcal{F}\left\{E_0(p,q)\right\}(k_{\bar{x}},k_{\bar{y}})-E_0(p,q)|\Omega_t||\\
=|\iint_{\Omega_t}E_0(p,q)e^{-i(k_{\bar{x}}\bar{x}+k_{\bar{y}}\bar{y})}d\bar{x}d\bar{y}-E_0(p,q)1d\bar{x}d\bar{y}|\\
\leq|E_0(p,q)|\iint_{\Omega_t}|e^{-i(k_{\bar{x}}\bar{x}+k_{\bar{y}}\bar{y})}-1|d\bar{x}d\bar{y}\\
\leq\iint_{\Omega_t}\frac{\epsilon}{4|\bar{B}||\Omega|}d\bar{x}d\bar{y}=\frac{\epsilon|\Omega_t|}{4|\bar{B}||\Omega|}.
\end{array}
\end{equation}
Replacing by \(\mathcal{F}\left\{E_0(p,q)\right\}\) with the constant \(E_0(p,q)|\Omega_t|\) in Eq. (\ref{eq10}) we make an error

\begin{equation}\label{eq44}
\begin{array}{l}
\frac{1}{4\pi^2}\Big(\iint_{\bar{B}}+\iint_{\mathbb{R}^2/\bar{B}}\Big)|\mathcal{F}\left\{E_0(p,q)\right\}-E_0(p,q)|\Omega_t||e^{i|\bar{z}|\sqrt{k^2-k_{\bar{x}}^2-k_{\bar{y}}^2}}|dk_{\bar{x}}dk_{\bar{y}}\\
\leq\frac{\epsilon|\Omega_t|}{4|\bar{B}||\Omega_t|}\iint_{\bar{B}}1dk_{\bar{x}}dk_{\bar{y}}+\frac{\epsilon|\Omega_t|}{2|\Omega|}=\frac{\epsilon|\Omega_t|}{4|\Omega|}+\frac{\epsilon|\Omega_t|}{2|\Omega|}=\frac{3\epsilon|\Omega_t|}{4|\Omega|},
\end{array}
\end{equation}
where the estimates Eq. (\ref{eq41}) and Eq. (\ref{eq43}) have been used. The error for sum Eq. (\ref{eq11}) is obtained as

\begin{equation}\label{eq45}
\begin{array}{l}
err_{2}\leq\sum_t\frac{3\epsilon|\Omega_t|}{4|\Omega|}=\frac{3\epsilon}{4}.
\end{array}
\end{equation}
Thus, when the partition {\(\Omega_t\)} is dense enough and \(\mathbf{r}\in\mathbb{R}^3\) be such that \((\mathbf{r}-\mathbf{o})\cdot(\frac{\partial\mathbf{o}}{\partial p}\times\frac{\partial\mathbf{o}}{\partial q})\neq0\) for all \(\mathbf{o}\in\Omega\), we get by Eq. (\ref{eq38}) and Eq. (\ref{eq45})

\begin{equation}\label{eq46}
\begin{array}{l}
\Big|\Big|\sum_tE_t(\mathbf{r})\mathbf{e}_1-\frac{1}{4\pi^2}\sum_tE_0(p,q)\mathbf{e}_1(p,q)|\Omega_t|\iint_\mathbb{R}e^{i(k_{\bar{x}}\bar{x}+ k_{\bar{y}}\bar{y})}e^{i|\bar{z}|\sqrt{k^2-k_{\bar{x}}^2-k_{\bar{y}}^2}}dk_{\bar{x}}dk_{\bar{y}}\Big|\Big|\\
\leq err_1+err_{2}
\leq\frac{\epsilon}{4}+\frac{3\epsilon}{4}=\epsilon.
\end{array}
\end{equation}
Finally, by continuity of the integrand, the integral in Eq. (\ref{eq15}) exists, and, because of the estimate Eq. (\ref{eq46}), for the points \(\mathbf{r}\in\mathbb{R}^3\) like in Theorem there also exists

\begin{equation}\label{eq47}
\begin{array}{l}
\mathbf{E}_1(\mathbf{r})=\mathbf{E}_1(x,y,z)=lim_{|\Omega_t|\to 0}\sum_tE_t(\mathbf{r})\mathbf{e}_1\\
=\frac{1}{4\pi^2}\iint_{\Omega}E_0(p,q)\mathbf{E}_t(\mathbf{r};p,q)\|\frac{\partial o}{\partial p}\times\frac{\partial o}{\partial q}\|dpdq.
\end{array}
\end{equation}

\end{proof}

The proof above also holds for the second Eq. (\ref{eq19}) component; only a few adaptions are needed. Moreover, Theorem 5.1 holds with an expectation \(\mathbf{r}\notin\Omega\) for both Eq. (\ref{eq19}) components, and the solutions are visible in the simulations. The required extended proof is reasonably complicated and will be presented in a separate article. Its difficulty lies primarily in that integrals do not exist in the usual sense when \(\bar{z}=0\). 
\section{Appendix B}

The VSH beam shape  coefficients for each \(ij-\) plane wave of the incident field is defined as \cite{barber}
\begin{equation}\label{eqB1}
\begin{array}{l}
a^t_{{\text{emn}}_{ij}}=4i^n\mathbf{e}_{o_{ij}}\cdot\big[-\mathbf{e}_{\theta}\sin{(m\phi)}\frac{m}{\sin{\theta}}P^m_n(\cos{\theta})-\mathbf{e}_{\phi}\cos{(m\phi)}\frac{d}{d\theta}P^m_n(\cos{\theta})\big],\\
a^t_{{\text{omn}}_{ij}}=4i^n\mathbf{e}_{o_{ij}}\cdot\big[\mathbf{e}_{\theta}\cos{(m\phi)}\frac{m}{\sin{\theta}}P^m_n(\cos{\theta})-\mathbf{e}_{\phi}\sin{(m\phi)}\frac{d}{d\theta}P^m_n(\cos{\theta})\big],\\
b^t_{{\text{emn}}_{ij}}=-4i^{n+1}\mathbf{e}_{o_{ij}}\cdot\big[\mathbf{e}_{\theta}\cos{(m\phi)}\frac{m}{\sin{\theta}}P^m_n(\cos{\theta})-\mathbf{e}_{\phi}\sin{(m\phi)}\frac{d}{d\theta}P^m_n(\cos{\theta})\big],\\
b^t_{{\text{omn}}_{ij}}=-4i^{n+1}\mathbf{e}_{o_{ij}}\cdot\big[\mathbf{e}_{\theta}\sin{(m\phi)}\frac{m}{\sin{\theta}}P^m_n(\cos{\theta})+\mathbf{e}_{\phi}\cos{(m\phi)}\frac{d}{d\theta}P^m_n(\cos{\theta})\big],
\end{array}
\end{equation}
where \(P^m_n\) is the associated Legendre function of the first kind of degree \(n\) and order \(m\). Additionally \(\theta,\phi\) are spherical coordinates and \(\mathbf{e}_\theta,\mathbf{e}_\phi\) spherical base vectors of \(\mathbf{r}\) in the local base \((\mathbf{e}_1,\mathbf{e}_2,\mathbf{e}_3)\) as in Eq. (\ref{eq27})
The scattered field from a differential element in VSH presentation is obtained as
\begin{equation}\label{eqB2}
\begin{array}{l}
\begin{aligned}
\mathbf{E}_{scat}(\mathbf{r})&=p^2\sum_m\sum_n D_{mn}\Big[f_{\text{emn}}\mathbf{M}^{(3)}_{\text{emn}}(k\mathbf{r})+f_{\text{omn}}\mathbf{M}^{(3)}_{\text{omn}}(k\mathbf{r})\\&+g_{\text{emn}}\mathbf{N}^{(3)}_{\text{emn}}(k\mathbf{r})+g_{\text{omn}}\mathbf{N}^{(3)}_{\text{omn}}(k\mathbf{r})\Big],
\end{aligned}
\end{array}
\end{equation}
where the superscripts (3) present the vector spherical harmonics (outgoing wave) with spherical Hankel function of the first kind \(h_{\text{n}}^{(1)}(kr)\). The \(f_{\text{emn}},f_{\text{omn}},g_{\text{emn}}\) and \(g_{\text{omn}}\) are vector spherical harmonic coefficient for the scattered field calculated from the T-matrix method as
\begin{equation}\label{eqB3}
\begin{array}{l}
[f_{\text{emn}} \ f_{omn} \ g_{\text{emn}} \ g_{omn}]^T=T[a_{\text{emn}} \ a_{omn} \ b_{\text{emn}} \ b_{omn}]^T
\end{array}
\end{equation}
where the T-matrix elements for the multilayered sphere are obtained by the algorithm defined in \cite{pena}.

\bibliography{sample}





\end{document}